\documentclass[conference]{IEEEtran}

\usepackage{cite}

\usepackage{amsmath,amssymb,amsfonts}

\usepackage{algorithmic}

\usepackage{graphicx}

\usepackage{textcomp}

\usepackage{xcolor}

\usepackage{url}

\usepackage[T1]{fontenc}
\usepackage[utf8]{inputenc}

\usepackage{booktabs}
\usepackage{tabularx}

\newtheorem{proposition}{Proposition}

\begin{document}

\title{Defining Decentralization: An Ontological Perspective}

\author{
    \IEEEauthorblockN{Jakub K. Szel\k{a}g}
    \IEEEauthorblockA{\textit{School of Computing}\\
    Newcastle University\\
    Newcastle upon Tyne, UK\\
    J.K.Szelag2@newcastle.ac.uk}
    \and
    \IEEEauthorblockN{Aydin Abadi}
    \IEEEauthorblockA{\textit{School of Computing}\\
    Newcastle University\\
    Newcastle upon Tyne, UK\\
    Aydin.Abadi@newcastle.ac.uk}
    \and
    \IEEEauthorblockN{Mohammad Naseri}
    \IEEEauthorblockA{\textit{Flower Labs}\\
    mohammad@flower.ai}
}

\maketitle

% An idea to strengthen the paper: So far we have issues with portraying ownership, which might be attacked by the reviewers. potentially include a "realization -> center" divide, could be interesting and strengthen our claims. Also rewrite related works, although I've written most of it myself- ai tools flag it as "Ai written". 

% initial idea: \mu(u) = |\delta_u(R_u)| where \delta_u maps from raw realizations to the number of C_u centers? Look into that- would really strengthen the paper! 

\begin{abstract}
% Rewrtitten the abstract so it incorporates the our previous discussion
Decentralization as a concept in computer science has existed for over half a century. Despite its fundamental role across domains such as security, distributed computing, artificial intelligence, cloud infrastructures, and Internet of Things (IoT) architectures, there remains no universally accepted definition of decentralization applicable across computer communication systems. This absence has become increasingly problematic with the emergence of decentralized AI and machine learning paradigms, including collaborative training, distributed inference, blockchain-based, and agentic AI, where decentralization is often treated as a core design objective. Meanwhile, existing approaches frequently conflate decentralization with related notions such as distribution of trust or specific implementation paradigms. This ambiguity creates inconsistencies in system analysis, limits comparability between works, and weakens the rigor of formal reasoning surrounding communication architectures and protocol design. In this work, we define this research gap as the \textit{Decentralization Problem}.

We analyze the formal-semantic, epistemological, and pragmatic foundations of decentralization, showing why existing definitions remain context-dependent, technology-specific, or conflated with distribution. From this analysis, we derive the requirements for a transferable formal treatment and introduce a graph-based ontology that treats raw realization multiplicity, topological distribution, and decentralization with respect to a declared anchor (such as ownership or authority) as distinct properties. Anchors are interpreted through graph-realistic center regions, allowing a subject to be distributed across several vertices while remaining centralized with respect to a single center. The framework supports multidimensional, anchor-relative evaluation through two analytical metrics: Void Tolerance and Imperviousness. We implement the framework as a browser-based application supporting interactive modeling, automated classification, metric computation, and large-scale deterministic simulation. Applications to federated learning and blockchain models demonstrate a cross-domain capabilities for producing comparable assessments when the graph abstraction, evaluation profile, and center interpretations are disclosed.

\end{abstract}

\begin{IEEEkeywords}
Decentralization, Distributed Computing, Communication Systems, Ontology
\end{IEEEkeywords}

\section{Introduction}

% To strenghten the interest of reviewers: Mention a good example comparing the original work commonly used in the literature (baran) and the definitions behind decentralized ML paradigms, and how decentralized AI is viewed (Give vanilla FL as the main example) - done

% Simplify the motivation, why is it interesting,why is it a problem etc. - abstract

Decentralization has become one of the defining architectural principles in modern computer communication systems. The concept is foundational across security, distributed computing, and increasingly artificial intelligence. Despite that, there exists a striking inconsistency between how decentralization is represented and how it is described in many modern systems.

A useful illustration emerges by contrasting one of the earliest and most influential models of decentralization with a widely adopted machine learning paradigm. Baran’s \cite{baran_distributed_1964} seminal work on distributed communications distinguishes centralized, decentralized, and distributed networks as structurally different graph topologies, with decentralization explicitly characterized through intermediate hierarchical connectivity and distribution represented through fully peer-connected redundancy \cite{baran_distributed_1964}. This structural framing has shaped decades of systems literature and remains a common conceptual reference point.

By contrast, modern decentralized machine learning literature frequently applies the term \emph{decentralized} to architectures whose structural properties differ substantially from Baran’s taxonomy. Federated Learning (FL) serves as a particularly useful example, one we will consistently refer to. In vanilla FL \cite{mcmahan_communication}, a central server coordinates model aggregation while client devices train locally and periodically transmit updates to the server. This approach is widely described as privacy-preserving and decentralized because data remains distributed across participants, allowing for clients to keep their own datasets private. From a communication-graph perspective, however, the system remains structurally centralized: clients depend on a single aggregation point, communication paths are mediated through the server, and the server represents a coordination bottleneck and a trust authority. This is often represented as a star-topology, which through the lens of \cite{baran_distributed_1964} is considered to be centralized. Interestingly, despite the fact that the FL as a whole is seen as one of the more promising decentralized ML paradigms, further classifications follow in the FL literature breaking it down further into ``Centralized FL'' and ``Decentralized FL'' \cite{yuan_decentrazlied_2024}.

This discrepancy highlights a crucial question, central to this work. If a canonical systems model classifies a topology as centralized, while field-specific literature labels an analogous architecture as decentralized, then questions arise as to the exact meaning of decentralization. It is clear that Baran's structural point of view significantly differs from much of, for example, ML literature where decentralization is inferred from placement of data or model training. As much as these are important properties for ML, they are only specific dimensions that do not fully encapsulate decentralized communication structure even in the field of ML, provided that we may have to consider for other dimensions (e.g., security-enhancing solutions). As a result, identical architectures may be described as decentralized and centralized as a whole \textit{in the same context}. This ambiguity raises a question regarding the formal semantics of these terms, and how can we universally address them. 

These questions have become increasingly important with the emergence of decentralized AI systems, including federated learning, swarm intelligence, collaborative model training, distributed inference pipelines, and blockchain-assisted AI \cite{DecFL_Beltan_2023, Aishwarya_swarmIntelligence_2023, decInference_wang_2026, Blockchain_AI_Salah_2019, collabML_Saif_2025}. Across these domains, decentralization is frequently presented as a core design objective and as a source of resilience, privacy, autonomy, and trust minimization. However, the term itself is often used interchangeably with distribution, locality, or absence of centralized storage, even when underlying coordination mechanisms remain structurally centralized.

This creates an unusual tension, as decentralization is routinely invoked as both a design objective and an analytical property, and systems are often described, compared, or evaluated in terms of how decentralized they are. All while lacking a formal definition that specifies what decentralization is independent of a particular technology or application domain, making these comparisons often incompatible across literature and conclusions from evaluations inconsistent. In practice, decentralization is frequently treated as self-evident, that is, inferred through intuition, approximated through domain-specific metrics, or assumed through the absence of a surface-level, and easily identifiable central authority. While useful in narrow contexts, these interpretations do not generalize and often reinforce previously listed concerns even in the same context.

This lack of formalization gives rise to several unresolved research problems.

\textbf{Problem 1: Definitional Ambiguity.}  
There is no universally accepted, transferable definition that both supplies invariant truth conditions and exposes the context under which they are evaluated. Existing work often relies on examples, system-specific heuristics, or associated properties such as resilience and trust minimization. As a result, one system may be called decentralized under one implicit interpretation and centralized under another.

\textbf{Problem 2: Conceptual Confusion with Distribution.}  
A persistent ambiguity throughout the literature is the interchangeable use of \textit{decentralization} and \textit{distribution}. Although the terms are often treated as equivalent, they refer to fundamentally different properties: one concerning relational structure and dependency, the other concerning allocation or placement of components. Without a formal distinction, analytical conclusions derived from one are frequently attributed to the other, which obscures both.

\textbf{Problem 3: Lack of Structural Formalism.} 
Many existing approaches evaluate decentralization through node-centric quantities such as participant counts, ownership concentration, or resource allocation. While informative, these approaches abstract away the structure of relationships between entities. This presents a fundamental limitation, as decentralization is inherently relational, it emerges through dependencies, communication paths, and structural constraints connecting participants rather than through participant attributes in isolation.

\textbf{Problem 4: Absence of Transferable Quantification.}  
Even where decentralization is quantified, metrics are often tied to particular architectures, most notably blockchains. This limits reuse across peer-to-peer networks, federated learning, and decentralized AI coordination systems, and leaves the compatibility conditions for comparison unstated.

Addressing these challenges requires a formal treatment where rules are transferable across technologies and application domains while governance, ownership, authority, and similar interpretations remain explicit inputs. The remainder of this paper develops such a framework within a graph-realistic scope.

\section{Our Contributions}

With that said, this work addresses the \textit{Decentralization Problem}: the absence of a formal and transferable rule for evaluating decentralization across computer communication systems without concealing the context that gives a decentralization claim its meaning. We first examine why the concept has resisted formalization, then derive a graph-theoretic ontology and apply it across heterogeneous system models. Our contributions are:

\begin{enumerate}
    \item\textbf{We expose why existing approaches do not provide a transferable definition.} We show that prevailing approaches are context-dependent, technology-specific, or conflate decentralization with distribution, governance, or resource allocation, identifying the obstacles to a reusable formal treatment.
    
    \item\textbf{We identify the requirements and scope of a cross-domain definition.} These include a transferable rule, explicit relational structure, compatibility with formal reasoning, and preservation of contextual interpretation through declared subjects and anchors.

    \item \textbf{We introduce a graph-based ontological definition of decentralization for computer communication systems.} We propose an ontology-level formalization designed as a general conceptual foundation independent of any application domain. It distinguishes a subject from the anchor under which centralization is interpreted (for example, ownership or authority), represents the corresponding centers extensionally as vertex regions, supports multiple subject-anchor evaluations, and is directly applicable to arbitrary \textit{graph-representable} systems.

    \item \textbf{We formally separate decentralization from distribution and redefine centralization through this framework.}  
    A central result of this work is a rigorous semantic distinction between decentralization and distribution within graph-realistic setting, resolving the ambiguity present throughout the literature. The functions $\delta$, $\lambda$, and $\mu$ respectively count realization particulars, supporting vertices, and center regions reached under an anchor. Within the same framework, we show that centralization emerges naturally as the degenerate zero-dimensional case of decentralization, establishing a cleaner and more principled relationship between the two.

    \item \textbf{We extend the definition with analytical metrics beyond classification.} Based on our ontology, we introduce two graph-based subject-specific metrics: \textit{Void Tolerance} and \textit{Imperviousness}, which measure the impact of nodes and connections respectively. Together they produce a decentralization vector that enables direct comparison of systems and exposes drawbacks that scalar-only measures cannot capture. On top of that, we base our reasoning in formal semantics to argue for why these metrics (specifically, their aggregate), can be used to describe a system as more or less decentralized from one another.

    \item \textbf{We validate the framework on heterogeneous real-world systems and demonstrate transferability.}  
    We instantiate the ontology on fundamentally different communication architectures, including blockchain and federated learning, and show that the framework yields consistent classifications and meaningful analytical insight where existing definitions become incomplete or contradictory. This demonstrates that the proposed ontology can function as a universal formal foundation for decentralization across computer systems.

    \item \textbf{We provide a sandbox implementation for decentralization analysis.} Based on our analytical evaluation metrics and ontological construction. We implement a prototype tool that can be used to evaluate arbitrary systems encoded in graph representations, providing an ontological inference regarding the status of decentralization, as well as exact values for Imperviousness and Void Tolerance.
\end{enumerate}

%The goal of this work is to establish a rigorous and transferable definition of decentralization for arbitrary computer communication systems, independent of implementation paradigm, governance assumptions, or application context.

\section{Related Works}\label{sec:related_works}

\subsection{Structural Descriptions}

Across network theory, distributed systems, and blockchain research, decentralization has become a remarkably productive object of study \cite{lui_2026_blockchainbased, yuan_decentrazlied_2024, Bodo2021Decentralisation}. The resulting literature has developed our informal understanding of decentralized architectures and furnished an increasingly sophisticated analytical toolkit. What was not successful, however, was a comparable agreement on the definition of the concept as decentralization is still rarely defined formally. Given our earlier discussion, it is more often inferred from intuition, reconstructed from context, or identified with the properties of a particular application. The literature is therefore rich in uses of decentralization, but fragmented in its formalizations of it.

Much of the vocabulary begins Baran's influential three-part typology (centralized, decentralized, and distributed) remains one of the field's earliest, and most recognizable structural descriptions \cite{baran_distributed_1964}. The work introduced redundancy level as the minimum number of links needed to preserve connectivity and used probabilistic models to evaluate resilience under node and link failures. That separation between nodes and links is consequential, as we will see later,  already indicating that structure cannot be reduced to a count of participants.

Nevertheless, it is not unreasonable to state that a topology taxonomy is not yet a definition. Baran's categories are conveyed through representative diagrams, examples, and intuitive argument, rather than through necessary and sufficient conditions for decentralization. Decentralization consequently appears as a recognizable pattern, not a rigorously specified property. The decentralized case is also rendered hierarchically. In effect, the illustration couples decentralization to hierarchy without formally defending that coupling, and without establishing that it extends to decentralized systems beyond the depicted architecture \cite{baran_distributed_1964}.

\subsection{Consensus and Trust Frameworks}

From topology, the literature turns to agreement under adversity. The ``Byzantine Generals Problem'' and the Byzantine fault-tolerance literature that followed brought exact reasoning to coordination among potentially adversarial participants \cite{lamport_byzantine_1982}. In the standard setting, these results establish that reliable consensus cannot be guaranteed once adversarial participants reach or exceed $\frac{1}{3}$ of the total. They thereby connect decentralized operation to provable guarantees about trust and fault tolerance.

Here, however, rigor is directed at a neighboring question. The formal object is the feasibility of consensus under stated adversarial assumptions, not decentralization itself. Consensus is specified, whereas decentralization remains the operating context in which consensus must be achieved. Its structural basis, and indeed its epistemological meaning, is left implicit \cite{lamport_byzantine_1982}.

Later introduced CAP theorem sharpens a different boundary. Consistency, Availability, and Partition tolerance cannot all be guaranteed simultaneously in a distributed system according to \cite{brewer_towards_2000}. By making this trade-off explicit, CAP replaced a simpler picture of distribution with one structured by unavoidable design choices. Yet it still begins after the relevant conceptual assumption has been made, that being a preexisting view of a distributed architecture. Despite its significance, it does not tell us what first makes that system decentralized. Its influence on the analysis of system properties is therefore substantial, while its contribution to defining decentralization is indirect at most.

Blockchain changed both the technological landscape and, coincidentally, the vocabulary of this debate \cite{nakamoto_bitcoin_2008}. One can observe that in Bitcoin's original work, and even more strongly in the literature that followed, decentralization became associated with removing the trusted third party (even though the original paper does not use the term ``decentralized''). That association proved powerful as it made blockchain the predominant setting in which decentralization was discussed, operationalized, and evaluated.

The cost of that influence was conceptual narrowing in a form of removing an intermediary allowing to be an expression of decentralization, despite being technology and context-specific. Subsequent literature often allowed that distinction between distribution and decentralization to collapse, using decentralization as if it were interchangeable with peer-to-peer distribution or trust minimization. What remains missing was a formal account of where one concept ends and the others begin.

\subsection{Quantitative Proxy Measures}

Once decentralization became an explicit design objective (or a desirable property of a system), measurement followed. Leading with \textit{Nakamoto coefficient} as perhaps the clearest example, recording the minimum number of entities whose compromise would be sufficient to obtain systemic control over a blockchain network \cite{srinivasan_quantifying_2017}. Still, the metric begins by taking the blockchain architecture for granted. It represents decentralization through control over particular subsystems and derives that control from node counts alone. What it measures, therefore, is concentration within a specified architecture and not decentralization as a structural property transferable across architectures. Approaches based on the effective distribution of power among protocol participants reach a similar boundary, given that their quantities depend on blockchain-specific consensus rules, participant roles, and resource assumptions \cite{kwon_impossibility_2019}.

Other studies broaden the quantitative toolkit by borrowing from adjacent disciplines. The \textit{Gini coefficient} captures inequality in resource allocation, and \textit{Shannon entropy} captures uncertainty or dispersion across participants \cite{lin_measuring_2021}. Yet, neither resolve the underlying definitional problem. They reveal how resources or participation are distributed among nodes, but not what decentralization itself is.

More importantly, these measures see attributes more readily than relations. Holdings, participation rates, and node shares are visible; the dependencies connecting the corresponding entities are abstracted away. For decentralization, that abstraction is decisive, as it is known in the literature that this property is expressed through relationships and dependencies among participants, not through participant attributes considered one at a time---making decentralization relational \cite{rossi_towards_2019}.

\subsection{Systematization}

A further step becomes visible when the literature moves beyond measuring a single instantiation and attempts to systematize decentralization across a class of systems or contexts \cite{Troncoso_2017, shahsavari_2022_quantifying}. These works bring the definitional problem closer to the foreground, although they approach it from markedly different directions.

The authors of \cite{Troncoso_2017} draw an informal epistemological distinction between distributed and decentralized systems. In their account, decentralization is relational, while distribution operates as its direct prerequisite. This is an important shift allowing the analysis to no longer treat the two terms as uncomplicated synonyms.

However, the framework is scoped from the outset to privacy and places particular weight on trust among authorities. Such a focus may be sufficient within parts of the security literature, but it creates tension elsewhere. Systems commonly regarded as decentralized may contain elements that are not themselves distributed, take for example cryptocurrency nodes, which may be physically clustered on the same host, in one data center, or within a single cloud region \cite{gencer_2018_decentralization_bitcoin}. The privacy-centered scope also travels less easily to discussions of decentralization outside security. Ultimately, decentralization is considered \textit{en bloc} as the principal question behind how it affects privacy, not what decentralization is in the first place \cite{Troncoso_2017}.

Approaches to the problem from the other end where quantification takes priority \cite{shahsavari_2022_quantifying} provide us with an analytical framework that is confined to blockchain networks, often instantiated as unstructured peer-to-peer architectures. Within that setting, however, distribution and decentralization are not explicitly distinguished and therefore never become objects of contrast.

Decentralization instead appears through the forms it assumes in blockchain systems, rather than as a formal concept capable of standing independently of them. Once again, the literature succeeds in illuminating the properties, manifestations, and measurable consequences of decentralization while leaving its definition unresolved. That recurring displacement (from the concept to its effects) provides a further motivation for the research question pursued here \cite{shahsavari_2022_quantifying}.

\subsection{Summary}

The trajectory is both substantial and incomplete. From topology and resilience, through consensus and trust, to architectural trade-offs and quantitative concentration, the literature has formalized many of the phenomena that surround decentralization. What repeatedly becomes rigorous is an associated property, such as connectivity, fault tolerance, consistency, intermediary removal, resource dispersion, or control. Decentralization itself remains largely descriptive, context-bound, or inferred from those proxies \cite{baran_distributed_1964, lamport_byzantine_1982, brewer_towards_2000, nakamoto_bitcoin_2008, srinivasan_quantifying_2017, Troncoso_2017, shahsavari_2022_quantifying}. It is this persistent gap between formalizing what properties correlate with decentralization and formalizing decentralization itself that motivates the ontology introduced in this work.

\section{Preliminaries}

\subsection{Adjectival Dimensionality of Decentralization}

The root of the problem becomes much clearer once we view decentralization through the lens of formal semantics, as decentralization is a multi-dimensional gradable adjective, where what it attributes cannot be fully confined in a single dimension. A rudimentary work in this direction by Sassoon \cite{sassoon_typology_2013} shows that multidimensional adjectives require a dimension parameter, that is, when a speaker uses an adjective such as "healthy", the relevant dimensions are often left to context, and some binding operation determines how those dimensions jointly produce a single interpretation. This is directly relevant to decentralization, given that in much of the literature, a system is described as decentralized only with respect to some implicitly selected set of dimensions. However, the operation by which these dimensions are bound together is often left unspecified. As a result, different works may appear to disagree about decentralization while in fact relying on different contextually selected dimensions, or on different ways of combining those dimensions.

Following Sassoon's work, multidimensional adjectives may be interpreted conjunctively, disjunctively, or in a mixed manner: (a) an entity may need to satisfy all relevant dimensions, (b) one relevant dimensions, or (c) a pragmatically determined subset of them in order to fall under the adjective \cite{sassoon_typology_2013}. Decentralized appears to behave as a mixed adjective in this sense. Consider a system whose communication is decentralized, but in which aggregation authority or data remain centralized. We then begin to ask ourselves a question if that system should or shouldn't be described as decentralized and why. Conversely, consider a system with decentralized data, but whose infrastructure depends on a small number of coordinating entities. Whether that system is decentralized or not depends on both the degrees reached along each dimension, but also on which dimensions are taken to be relevant and how they are bound into a single judgment.

For the formal treatment in the following sections, each relevant dimension is made explicit as a subject-anchor pair, that is, what is projected over the graph, and with respect to which interpretation its centers are counted. This permits the same subject to be evaluated under ownership and authority without equating either anchor with the subject itself.

\subsection{Distribution vs. Decentralization}

The academic community generally agrees that decentralization and distribution are distinct concepts. Distribution commonly refers to the placement or dispersion of system elements across multiple locations, whereas decentralization concerns the structure, organization, and governance of relationships between entities within a system \cite{rossi_towards_2019}. This distinction is not secured solely by using two names for counts. For example, physical distribution may remain within one effective center (e.g., all entities fall under one ownership).

If we combine this with previously discussed formal semantics, it helps explain why decentralization is difficult to compare across contexts. Both "decentralized" and "not decentralized" may be asserted with respect to some dimension of the same system, just as "centralized" and "not centralized" may also hold relative to different dimensions. Distribution, by contrast, is more commonly used disjunctively in technical discourse, where a system may be called distributed if at least one relevant class of components or resources is placed across multiple locations or entities \cite{rossi_towards_2019}. Decentralization does not usually permit the same looseness without further clarification. Its dimensions must be specified, and the standard for satisfying them must be made explicit. This is precisely where an ontological framework becomes useful, it provides a cross-contextual structure for identifying the relevant dimensions of decentralization and for making explicit the operation by which those dimensions are combined into a coherent assessment.

\subsection{Formal vs. Informal Reasoning in Decentralization}

Broadly speaking, reasoning is defined as the a cognitive process of drawing inferences, evaluating arguments, and forming conclusions from premises. The main distinction between formal and informal reasoning is thus drawn by what governs them both. \textit{Formal reasoning} requires explicit rule-bound systems (in our context provided by mathematical logic) where validity of arguments is entirely independent from the context, prioritizing form over contents. Requiring closed systems that are precisely defined any by their very nature incomplete \cite{godel_uber_1931}. \textit{Informal reasoning} however, operates within the domain of natural language where formal logical validity alone is deemed as insufficient (or inapplicable \cite{Toulmin1958-TOUTUO-8}).  

Within this framing, the term decentralization exposes the limitations of a strict formal/informal divide. Formally, decentralization can be specified in terms of graph-theoretic properties of distributed networks and/or the absence of a singular coordinating authority, thereby lending itself to representation within rule-governed systems. However, its practical invocation in discourse about system design frequently extends beyond these formal properties to include informal claims concerning robustness, transparency, governance structure, trust, and more. Showcasing how formal technical concepts are often interpreted through informal lens which incorporates contextual judgment. On that basis decentralization serves as a great example of how reasoning in computer science oscillates between formally tractable descriptions and informally grounded assessments, thereby reinforcing the view that the boundary between formal and informal reasoning is functionally dependent in applied contexts.

Although this distinction might appear as self-evident from the contemporary standpoint in the field of computer science, it is crucial to highlight for our argument. Our arguments that link decentralization, ontologies, and formal logic, will be inspired by the claims of Finocchiaro \cite{Finocchiaro_2005, Finocchiaro2005-FINAAA-4}; specifically, those which establish that this divide is complementary rather than being mutually exclusive.  

\section{The Decentralization Problem}

\subsection{Informal Reasoning as a Basis}

In practice, the informal component of reasoning often becomes the \textit{de facto} basis upon which ``decentralization'' is defined and assessed \cite{king_centralized_1983, muller_data_2017, di_bona_concept_2023, yanagihara_cross-referencing_2021, sun_duopoly_2024, luo_token_2026} where discourse frequently defaults to intuitively accessible but weak notions such as ``absence of centralization'' \cite{tsigkanos_2019_towards}, or resorts to their own interpretations of the term \cite{yanagihara_cross-referencing_2021}. The term acquires a degree of flexibility that allows it to function as a normative label as much as a technical descriptor. This presents a structural problem as evaluative judgments about decentralized systems may be formed on the basis of rhetorically important but analytically weak criteria. In this sense, ``decentralization'' becomes a case study in how informal reasoning can silently dominate conceptual formation.

\subsection{Pragmatic Multi-Dimensionality of Decentralization}

Previous works that focus on questions around defining decentralization in different contexts have converged on a shared intuition that decentralization is not a binary property, but multi-dimensional. Notably, \cite{rossi_towards_2019} provides an extensive treatment of this problem, drawing lessons directly from blockchain networks to present a framework that breaks down decentralization across several distinct dimensions, each of which can be independently more or less centralized. While the scope of that work remains largely confined to blockchain infrastructures and the framework itself is conceptual rather than formal, it nonetheless establishes a precedent that rigorous attempts to describe decentralization must account for multiple axes of evaluation.

Across broader literature, there is a growing acknowledgment that decentralization resists a single, unified measure. As noted in \cite{alghuried_blockchain_2025}, there exists no exact mathematical definition of decentralization; rather, quantitative approaches such as the Gini coefficient are employed to approximate the degree to which a system tends toward one end of a spectrum or the other. As such, decentralization is treated not as an absolute state, but as a scalar property. 

The lack of formalization leaves us without a principled basis for cross-system reasoning. A transferable account must therefore satisfy three requirements. First, it must be multidimensional in the subjects under evaluation. Second, it must state the anchor (for example ownership or authority) under which centers are distinguished, rather than treating that interpretation as another subject or leaving it as implicit. Third, it must separate a logical truth condition for each declared pair from any subsequent gradable evaluation of topology-dependent extent. The framework below uses $\mu$ for the binary anchor-relative boundary and retains Void Tolerance and Imperviousness as explicitly non-unique analytical degree functions.

\subsection{Abstraction Level for Approaches}

% We address the issues above in culmination with the following approach:
% ^ the level of abstraction for the previous works that try to define and formalize decentralization was too "low", that is, it was too constrained to the given contexts provided the informal reasoning as the basis. Operating more on levels of taxonomies rather than formal definitions despite their attempts, this is shown by the fact that more recent attempts try to formalize decentralization *for the purpose of* a given context authors want to review.

The works surveyed in Section \ref{sec:related_works}, taken together, reveal a recurring and systematic limitation. Namely, the level of abstraction at which decentralization has been approached has remained insufficient, constrained by the specific contexts from which each contribution emerged and by the informal reasoning that has served as its conceptual basis. Rather than producing general formal definitions, the existing literature has largely produced taxonomies and context-specific measures, a tendency we can observe from most recent attempts to formalise decentralization doing so specifically for the purpose of evaluating a particular system or setting the authors wish to examine \cite{srinivasan_quantifying_2017, kwon_impossibility_2019, lin_measuring_2021}. The result is a body of work in which the definition follows from the application, rather than the application following from the definition. It is precisely this inversion that the present work seeks to correct by approaching decentralization at a level of abstraction sufficient to accommodate a formal treatment.

\subsection{Culminating the Decentralization Problem}

% We conclude the problem in the following manner:
% ^ There is a clear gap where decentralization, despite being used in formal settings, is based in informal reasoning when discussed in different contexts, making it a non-transferrable concept among different applicaitons. 
% ^ There are methodological issues with the recent approaches, despite the improvements they make in acknowledging multi-dimensionality, because in turn, they often disregard the relational aspect and connections between entities within the communication systems. Marking a clear distinction between approaches before and after the introduction of bitcoin, since it became a conceptual "golden-standard" for decentralization within the literature. As provided by our argumentation, this is incorrect given the omissions it makes.
% ^ These methodological issues within the level of abstraction, as well as the infromal basis for the reasoning is the reason why we have the "Decentralization Problem"

Despite its prevalent use in formal settings across multiple domains of computer science, decentralization remains a concept grounded in informal reasoning, defined contextually and inconsistently such that it does not transfer reliably between applications or system classes. The methodological issues identified in the more recent literature compound this further: while later approaches acknowledge the multi-dimensional nature of decentralization as an improvement over earlier binary characterisations, they do so while simultaneously disregarding the relational aspect of communication systems, reducing their analyses to the properties of individual entities and omitting the connections between them. This tendency is closely tied to the significant influence of blockchain-based systems on the literature, which following their introduction effectively became the implicit conceptual standard against which decentralization is defined and measured. As we have already demonstrated, this is an insufficient basis for a general treatment of the concept, given the significant structural omissions it entails. 

It is the combination of these factors, namely the absence of more formal and transferable definition, the methodological inconsistencies in quantification, and the over-reliance on a single application context as a conceptual baseline, that altogether constitute what we refer to in this work as the \textit{Decentralization Problem}.

Table~\ref{tab:related_work_comparison} consolidates our analysis by comparing representative approaches against the requirements derived from the above. Specifically, the comparison considers whether each approach: (P1) defines decentralization itself rather than an associated property; (P2) explicitly distinguishes decentralization from distribution; (P3) represents decentralization through relationships and dependencies between entities; and (P4) provides quantification transferable beyond its original application domain (these are based on earlier established research problems). We additionally report whether an approach explicitly supports multiple dimensions of decentralization and whether it provides an ontological foundation.

\begin{table*}[t]
\centering
\caption{Comparison of representative approaches against the requirements derived from the Decentralization Problem.}
\label{tab:related_work_comparison}
\scriptsize
\setlength{\tabcolsep}{2.5pt}
\renewcommand{\arraystretch}{1.22}
\begin{tabular}{p{0.14\textwidth} p{0.25\textwidth} c c c c c c}
\hline
\textbf{Work} & \textbf{Primary focus and scope} & \shortstack{\textbf{Formal}\\\textbf{definition}\\\textbf{(P1)}} & \shortstack{\textbf{Decentralization/}\\\textbf{distribution}\\\textbf{distinction (P2)}} & \shortstack{\textbf{Relational}\\\textbf{structure}\\\textbf{(P3)}} & \shortstack{\textbf{Transferable}\\\textbf{quantification}\\\textbf{(P4)}} & \shortstack{\textbf{Multi-}\\\textbf{dimensional}} \\
\hline
Baran \cite{baran_distributed_1964} & Structural network typology and resilience under node and link failures. & -- & $\triangle$ & \checkmark & -- & -- \\
Lamport et al. \cite{lamport_byzantine_1982} & Consensus feasibility and Byzantine fault tolerance under adversarial participation. & -- & -- & $\triangle$ & -- & --  \\
Brewer \cite{brewer_towards_2000} & Trade-offs among consistency, availability, and partition tolerance in distributed systems. & -- & -- & $\triangle$ & -- & -- \\
Nakamoto \cite{nakamoto_bitcoin_2008} & Peer-to-peer electronic cash and removal of a trusted third party. & -- & -- & $\triangle$ & -- & --  \\
Srinivasan and Lee; Kwon et al. \cite{srinivasan_quantifying_2017,kwon_impossibility_2019} & Blockchain-specific concentration, subsystem control, and effective protocol power. & -- & -- & -- & -- & $\triangle$ \\
Lin et al. \cite{lin_measuring_2021} & Blockchain decentralization proxies based on resource inequality and participation dispersion. & -- & -- & -- & -- & $\triangle$ \\
Rossi et al. \cite{rossi_towards_2019} & Conceptual and multidimensional systematization of decentralization, primarily in blockchain infrastructures. & -- & \checkmark & \checkmark & -- & \checkmark \\
Troncoso et al. \cite{Troncoso_2017} & Privacy- and trust-oriented distinction between distributed and decentralized architectures. & -- & \checkmark & \checkmark & -- & -- \\
Shahsavari et al. \cite{shahsavari_2022_quantifying} & Analytical quantification of decentralization in blockchain peer-to-peer networks. & -- & -- & -- & -- & $\triangle$ \\
\textbf{Our work} & \textbf{Domain-independent, graph-based ontology with subject-specific classification and analytical metrics.} & \checkmark & \checkmark & \checkmark & \checkmark & \checkmark \\
\hline
\end{tabular}

\vspace{1mm}
\begin{minipage}{0.98\textwidth}
\footnotesize
\textit{Legend:} \checkmark\ indicates that the requirement is explicitly addressed; $\triangle$ indicates that it is addressed partially, descriptively, or only within a restricted context; ``--'' indicates that the requirement is not addressed as part of the work's treatment of decentralization. P1--P4 correspond respectively to the research problems: definitional ambiguity, confusion with distribution, absence of structural formalism, and absence of transferable quantification. The table evaluates each work only with respect to its treatment of decentralization; it does not assess the broader significance of its original technical contribution.
\end{minipage}
\end{table*}

% Repeating myself again
% The comparison illustrates that prior work addresses important properties associated with decentralized systems, including topology, resilience, consensus, trust, concentration, and multidimensionality. However, these contributions generally satisfy only a subset of the requirements identified by the Decentralization Problem. Structural approaches retain descriptive definitions, formal distributed-systems results analyze properties other than decentralization itself, and quantitative approaches remain predominantly node-centric or blockchain-specific. The proposed framework differs by integrating the four requirements within a single ontological treatment and defining decentralization independently of a particular application, separating it formally from distribution, modeling it as a relational and subject-specific property, and supporting cross-domain quantification through analytical metrics.

\section{Ontology as the Foundation}

% Ontologies are formalised conceptualizations of knowledge, allowing for formal descriptions of "non-logical" concepts such that True/False statementes can be evaluated and quantified. 
% They sit within the formal domain following the development of formal reasoning: propositional logic -> Predicate logic -> description logics/languages -> finally onto ontologies.
% BUT they are also capable of encapsulating concepts that can appear in the informal reasoning (specifically, the non-logical concepts), making it an appealing tool to address this problem

Defining decentralization and addressing the Decentralization Problem requires a framework capable of operating simultaneously within the domain of formal reasoning and across the breadth of contextual variation that the concept exhibits in practice. Ontologies, as formalized conceptualizations of knowledge, present themselves as a natural candidate for this purpose. Meanwhile as a tool, they provide a rigorous approach within which concepts and their relationships can be defined such that logical statements about them can be decidable and quantifiable. Crucially, however, ontologies are not restricted to purely logical primitives; they are equally capable of encapsulating non-logical concepts, that is, concepts whose meaning is not fully reducible to formal symbols alone but which can nonetheless be given precise structural descriptions within an ontological framework. More importantly, ontologies allow for context-dependence while still being regarded as formal. These properties operate with the rigor demanded by formal reasoning while remaining expressive enough to capture concepts that originate from informal discourse, making ontologies a particularly well-suited tool for addressing the  methodological gaps identified in our earlier discussion.

% We talk here about the current use of ontologies in the context of decentralization (refer to my own citations and notes)
% Interestingly, the use of formal ontologies for decentralization suffers from similar problem as previously discussed decentralization in the terms of abstraction level. Concretely, as appraoches to defining decentralization are too context-specific in order to define the concept of decentralization in the terms of *generalization* of decentralization definitions, ontologies used to define decentralized settings are also too context-specific in order to achieve that, as they often try to formalize conceptual knowledge within a particular domain that is more in line with a specific application rather than decentralization in general 

The application of ontologies to decentralized settings is not without precedent. Existing work has employed ontological frameworks across a range of domains in which decentralization plays a central role, including decentralized energy systems \cite{ont_wu_ontology_2021, ont_kaya_decent_2021}, eco-industrial information management \cite{ont_zhou_ontology_2018}, decentralized production control in cloud manufacturing \cite{ont_katti_ontology-based_2020}, and self-organising multi-agent systems \cite{ont_liu_adaptive_2021}. Broader cross-disciplinary treatments have also sought to map the conceptual landscape of decentralization across domains \cite{ont_hoffman_toward_2020}, and the relationship between decentralization and the organisation of knowledge itself has been explored at a philosophical level \cite{ont_halpin_decentralization_2016}. 

Existing ontologies in this space are generally developed for particular domains such as energy trading, manufacturing, or collective decision-making rather than as transferable treatments of decentralization. The resulting context dependence is not itself a defect given that every concrete claim needs interpretation. What we identify here as missing is a common rule that preserves and exposes that interpretation instead of burying it in a domain-specific label. Our use of subjects, anchors, and graph-realistic center regions targets that level of abstraction while remaining explicitly conditional on the declared context.

\section{Our Definition}

% Graph-based ontology for decentralisaiton, talk about usage of graphs
% Describe the aim as an introductory paragraph, mention ontology101 as the main reference (but again there is more so I'll take care of that later, the methodologies across the literature are idosyncratic sooo, mention gruber's and guarino's work too tbh)

To make our approach easier to follow, we organize our ontology-based definition into four layers: (a) \textbf{Ontological Layer}, where the core ontology will be defined with conceptual commitments, competency questions, signature, and taxonomy, (b) \textbf{Logical Layer}, where formal semantics, axiomitization, and reasoning properties will be specified, (c) \textbf{Modeling Layer} where we bind the graph-based topological representation, and (d) \textbf{Analytical Layer} where we introduce graph-based metrics as measures for decentralization and evaluate them on previous instantiations to showcase the following computational capabilities when using our definition. Aligning our approach as much as possible with previous works focusing on  ontology development such as \cite{guarino1998, ontology101}, while using standard notation for ontology and description logic \cite{baader_description_2007, arp_building_ontologies}. Similarly to those works, we also note that ontology development methodologies are known to be idiosyncratic (i.e. there is no single standard approach).

In our case, these layers are necessary to function as a definition. Provided that in formal reasoning direct definitions are bounded to the underlying context in order to make precise logical inferences. For us to have the capacity to produce truth-value expressions we are required to construct an ontological foundation, around which we will be able to formalize axioms governing inferences. Simultaneously, if we want to elevate this foundation onto practical systems and introduce quantifiable metrics based on the ontology, we need to specify how do we model our ontology for later analytical evaluation. Figure \ref{fig:roadmap} visualizes how these layers are connected altogether.

\begin{figure}[ht]
    \centering
    \includegraphics[width=0.5\textwidth]{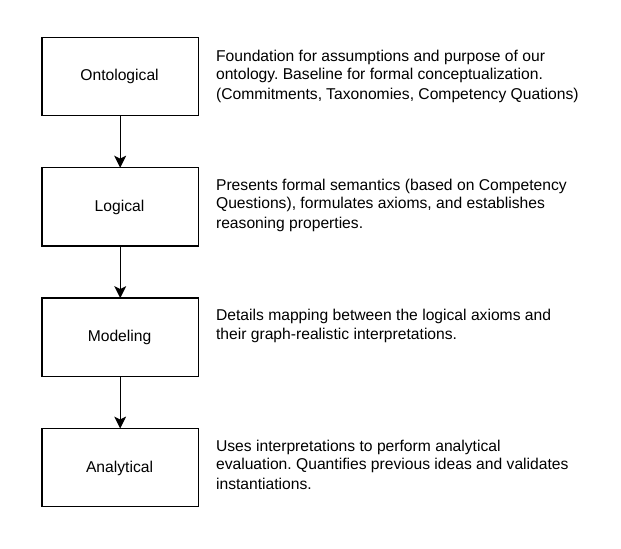}
    \caption{Ontology Roadmap.}
    \label{fig:roadmap}
\end{figure}

% ??We lean towards structural relationalism with contextual operationalism with graph-theoretic pragmatism

% Talk about complexity reduction at some point (maybe not here) to ensure its viably usable (also examples in instantiations will be of great use here)

\section{Ontological Layer}

This section aims to first formalize the core ontology behind decentralized and distributed systems (without using ``decentralization'' and ``distribution'' as primitive concepts), from which we can derive a formal definition for decentralization.

\subsection{Core Ontology}

%\subsection{Commitments, Signature, and Competency Questions}

% This section is mainly reconstructed/shortened from 2-3 pages of writing from main.tex, we only remove the aspect of proofs and context (as ontological concepts)
\noindent\textbf{Ontological Commitments.} Conceptually, the first and most fundamental commitment is to a \textit{graph-theoretic} realism about systems. A system is taken to exist as a configuration of computational entities and their interconnections, formally representable as a non-empty graph $G = (V, E)$. This commits us to the position that systems are not merely conceptual abstractions but structured objects. The ontology does not address systems that cannot be represented as a graph, thereby implicitly constraining the domain to computer systems in which nodes are computationally capable entities (which can both physical and non-physical).

A second commitment concerns \textit{temporality}. Our ontology evaluates systems at a fixed point in time. The framework treats distinct operational phases as distinct topologies, where each can be assessed independently. This commitment entails that dynamic behaviour is handled through static "snapshots". The practical consequence is that a system undergoing phase-dependent structural change (such as a federated learning protocol alternating between a cryptographic setup phase and an inference phase \cite{secaggplus}) is not treated as a single topology that evolves, but as a family of topologies corresponding to distinct points in time, each governed by its own constraint set. 
% it would be good to argue this further in the discussion

Lastly, the ontology adopts a \textit{multi-dimensionality} for subjects of decentralization. Following from what has been extensively discussed thus far,  we recognize that any contextually defined component of a system (whether it would be data localization, model training, etc.) constitutes a legitimate \textbf{subject of decentralization}, provided that it admits a graph-realistic representation. The interpretation against which center multiplicity is assessed is kept separate as an \textit{anchor}, such as ownership, authority, or trust. Thus, “data decentralized with respect to ownership” is an evaluation of a subject–anchor pair rather than an equation of data with ownership. The number of such pairs classified as decentralized defines the system’s
dimensionality relative to the declared evaluation profile, with a centralized system constituting the degenerate zero-dimensional case.

No fixed inventory or cardinality is imposed on subjects or anchors. Their interpretations are supplied for the system and context being modeled, while the center counting rule remains the same across instantiations.

% Revisit the description language later to make it more precise to what we actually use from it

% we define our ontology using $\mathcal{SROIQ(D)}$ description logic (which we denote as $\mathcal{L}$) as our baseline, and appeal to predicate logic where needed (taking a similar approach to that of \cite{gnatenkol_building_2024}).

Now formally, we follow the approach presented in \cite{guarino1998} for consistency with established ontology literature, as such, given a formal language $\mathcal{L}$, we construct our ontological commitment as follows:
\begin{equation}
    \mathcal{K} =\ \textless \mathcal{C},\, \mathcal{J} \textgreater
\end{equation}
Where $\mathcal{C}$ is our formal conceptualization expressed as:
\begin{equation}
    \mathcal{C} =\ \textless \Delta,\, \mathcal{W},\, \mathcal{R} \textgreater
\end{equation}
Where $\Delta$ is our domain (containing entities relevant to the computational systems under consideration), $\mathcal{W}$ is a set of possible worlds (maximal states of affairs for our domain \cite{guarino1998}, in our case, arrangement of these entities and their subjects of decentralization in the system), and $\mathcal{R}$ represents a set of conceptual relations for $\textless \Delta,\, \mathcal{W}\textgreater$. Meanwhile $\mathcal{J}$ denotes a conceptual interpretation function that maps elements from our signature $\Sigma$ (vocabulary, which will be specified in the following subsections) to their intended meanings within the conceptualization $\mathcal{C}$.

A model (or a model structure) for the language $\mathcal{L}$ can be expressed as $\textless \mathcal{M},\, \mathrm{I}\textgreater$ such that $\mathcal{M}$ denotes a world-relative relational structure, while $\mathrm{I}$ provides the interpretation mapping over that structure. $\mathcal{M}$ can be expressed as:
\begin{equation}
    \mathcal{M} = \textless \Delta,\, \mathrm{R}\textgreater
\end{equation}
Where $\mathrm{R}$ is a set of extensions relative to the world specific structure that models $\mathcal{L}$ such that $\mathrm{R} = \{ p(w)\ |\ p \in \mathcal{R},\, w \in \mathcal{W} \}$. Note that $\mathcal{R}$ denotes conceptual relations at the level of conceptualization, whereas $\mathrm{R}$ denotes their world-relative extensions. 

%Altogether using interpretation mapping:
%\begin{equation}
%    \mathrm{I}=(\Delta^I,\, \cdot^I)\quad \text{where}\quad \cdot^I : \Sigma \rightarrow \Delta^I\, \cup\, \mathrm{R} \label{eq:interpretation_function}
%\end{equation}
%Notice how Equation \ref{eq:interpretation_function} simply describes interpretation domain $\Delta^I$ with the associated interpretation function $\cdot^I$ that [[maps accordingly]] to our ontological commitment $\mathcal{K}$. 

With our commitment we construct an ontology $\mathcal{O}$ for the language $\mathcal{L}$ that commits to conceptualization $\mathcal{C}$:
\begin{equation}
    \mathcal{O} = \textless \Delta,\, \mathcal{W},\, \mathcal{R},\, \mathcal{J},\, \Sigma,\, \Phi \textgreater 
\end{equation}

A model $\textless \mathcal{M},\, \mathrm{I}\textgreater$ satisfies our ontology $\mathcal{O}$ as long as it satisfies our axioms $\varphi$, this follows the standard textbook notation \cite{baader_description_2007}.
\begin{equation}
        \textless \mathcal{M},\, \mathrm{I}\textgreater \vDash \mathcal{O} \iff \forall \varphi \in \Phi,\; \textless \mathcal{M},\, \mathrm{I}\textgreater \vDash \varphi \label{model_ontology}
\end{equation}

%Where $\mathcal{T}$ and $\mathcal{A}$ are our knowledge base with TBox $\mathcal{T}$ consisting of terminological axioms, ABox $\mathcal{A}$ containing assertions, and $\Sigma$ representing the ontology signature. We refer to \cite{baader_description_2007} for textbook definitions.

% Address concretely the abstraction level (besides just saying that "oh its an ontology")

\vspace{2mm}

\noindent\textbf{Ontology Signature.} In principle, we aim for our ontology to be as minimal as it is possible while sustaining the necessary expressiveness to define decentralized systems. Reason being is to avoid over-axiomatization with overly large number of primitive concepts. With that said, we define our ontology singature in the following manner.
\begin{equation}
    \Sigma = (C, R)
\end{equation}
Where $C$ denotes our primitive concepts and $R$ primitive relations. Our primitive concepts consist of the following:
\begin{equation}
    C = \{ System,\, Topology,\, Subject,\, Anchor \}
\end{equation}
These terms are defined as follows: $System$ is a set of computational entities that admits a structural representation, $Topology$ is the said graph-realistic relational structure representation, $Subject$ is a context-relative projection over the structure representation, and $Anchor$ represents the interpretation under which centers of centralization are distinguished. 
Similarly, our primitive relations contain:
\begin{equation}
    \begin{aligned}
        R = \{\, & hasTopology,\, hasSubject,\, hasProjection,\\
                 & ofSubject,\, inTopology,\, hasAnchor,\ \\
                 & hasRealization,\, realizedAt\, \}
    \end{aligned}
\end{equation}
Where $hasTopology$ indicates a relation between a given graph-based structure (which can be expressed with graph-theoretic methods; e.g., adjacency matrices) and a system instance, while $hasSubject$ describes a relation between a system instance and a context-relative subject. The relation $hasProjection$ connects a system to a projection particular $p$; $ofSubject$ and $inTopology$ identify, respectively, the unique subject and topology associated with $p$; $hasAnchor$ connects a given system with an anchor interpretation $a$, $hasRealization$ relates $p$ to its distinct realization particulars; and $realizedAt$ links each realization particular to the vertex at which it is realized. Projection and realization particulars are not introduced as additional primitive concepts as they are individuals in $\Delta$ identified through these relation patterns. Likewise, center interpretations are represented extensionally as subsets of the topology vertex set, so no separate center role is required.

\vspace{2mm}

\noindent\textbf{Ontology Taxonomy and Schema.} Next, we establish a taxonomy as its proper part \cite{arp_building_ontologies} where a hierarchy is required as it follows a graph-theoretic structure with a single root node. 

However, we are interested in addressing the decentralization problem through a clear separation of representational units that altogether constitute a system (specifically, those that can characterize a system particular as canonically decentralized or not). This must be done without constructing our taxonomy backwards from the terms of decentralization or distribution, as that would risk cyclic definitions. This way we also avoid binding ourselves to a specific context in which decentralization could be used in. 

Providing the context-dependent expressions of $Subject$ and $Topology$, it isn't possible to construct an \textit{is\_a} hierarchy between them (as well as the $System$) as they are described via other relations specified in $R$. Instead, each concept in $C$ has a respective taxonomic backbone. In order to sustain perspectivalism and realism of our ontology, both $Topology$, $Subject$, and $Anchor$ taxonomies can be expanded upon and established with specific contexts during the design and/or evaluation of a system. 

\begin{figure}[ht]
    \centering
    \includegraphics[width=0.3\textwidth]{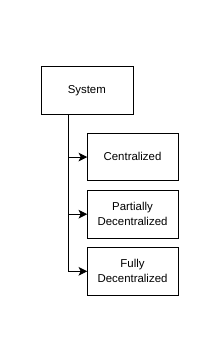}
    \caption{System-class taxonomic backbone.}
    \label{fig:taxonomical_backbone}
\end{figure}

Meanwhile, the set of representational units subsumed by $System$ concept is finite (Figure \ref{fig:taxonomical_backbone}), each describing the extent of decentralization a system has that follows from our system-centric ontological schema (Figure \ref{fig:ontology_schema}). The schema shows the mapping of non-taxonomical relations from $R$ across all concepts in $C$, along with additional relations describing how particulars are instantiated (for $Topology$).  

\begin{figure}[ht]
    \centering
    \includegraphics[width=0.5\textwidth]{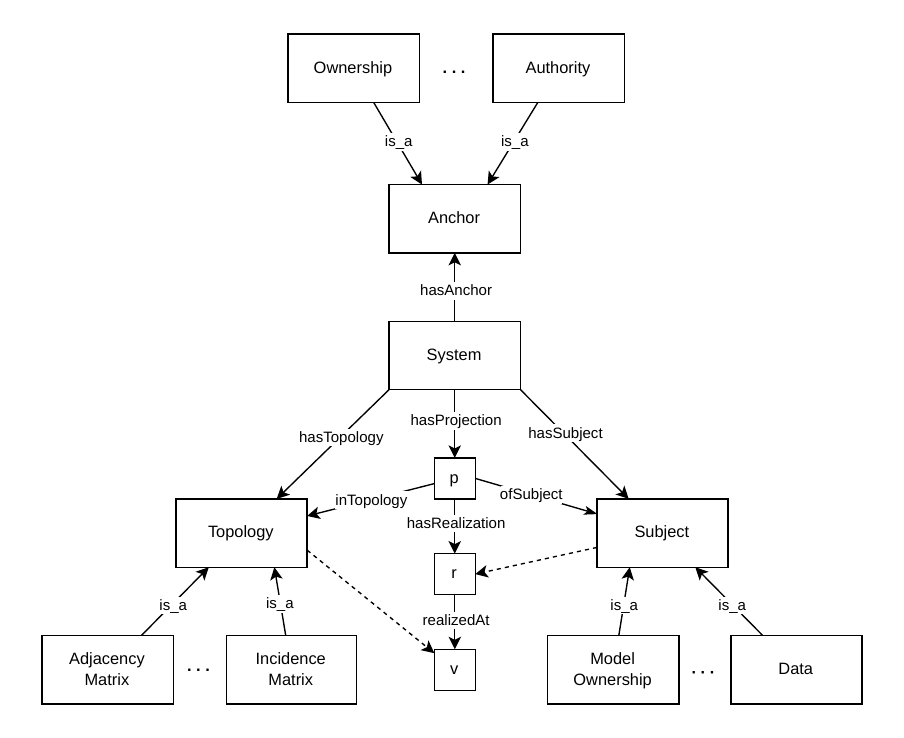}
    \caption{System-centric ontology schema. Individuals p, r, and v denote representative instances (particulars) used to illustrate object property assertions between ontology classes. Dashed arrows indicate informal conceptual associations only; v is obtained from the topology representation and r represents a realization of the subject, included solely to aid interpretation. Solid arrows denote ontology roles.}
    \label{fig:ontology_schema}
\end{figure}

\vspace{2mm}

% talk about perspectivalism (tehe are multiple accurate descriptions of reality), as that is one of the general principles

\textbf{More on Projection Particulars and Multiplicity.} The last caveat that needs addressing is the intention and interpretation of projection particulars. Our ontology identifies a projection particular $p$ through its relational structure. Under an interpretation $\mathrm{I}$, the set of projection particulars is derived as:
\begin{equation*}
\begin{aligned}
    \Pi^{I}
    = \bigl\{p\in\Delta^{I}\ \bigm|\ & \exists s,u,T\,\bigl(hasProjection(s,p) \\
    & \wedge ofSubject(p,u) \wedge inTopology(p,T) \bigr) \bigr\}
\end{aligned}
\end{equation*}
For a subject $u$ belonging to a system $s$ with topology $T$, we write $p_u$ for the unique projection particular satisfying $hasProjection(s,p_u)$, $ofSubject(p_u,u)$, and $inTopology(p_u,T)$. The multiplicity of the subject in that context is then derived from the distinct realization particulars related to $p_u$. Intuitively, $p$ can be thought of as a projection linking the system, topology, and subject together.

This distinction is necessary because realization multiplicity, distribution, and decentralization are not equivalent. We later derive $\delta(p_u)$ as the count of realization particulars and $\lambda(p_u)$ as the count of distinct vertices from a given topology supporting them. Multiple realization particulars may therefore be located at one vertex, so $\delta(p_u)>1$ with $\lambda(p_u)=1$ is entirely admissible. Crucially however, count does not by itself states how many ownership, authority, trust, or other effective centers those vertices represent. For that purpose, $\mu(p_u,a)$ counts the center ``regions'' reached by the subject under anchor $a$. Briefly, distribution depends on $\lambda$, anchor relative centralization and decentralization depends on $\mu$, and $\delta$ remains the raw multiplicity from which support is derived.

Notice how this interpretation is substantially different from works like \cite{Troncoso_2017} which focus on trust and direct control, but often limit the discussion on what can and cannot be controlled (e.g., an entity holds data which belongs to multiple clients  is cryptographically bidden, that same entity can erase data completely from itself but cannot modify it without clients noticing) as well as particular contexts (blockchain-based privacy). This is partly the reason why our focus shifts directly onto subjects themselves, as we want to avoid restricting ourselves to particular elements that are associated with decentralization in some contexts. 

\par

\vspace{2mm}

\noindent\textbf{Competency Questions.} The questions this ontology answers address the existence and extent of decentralization across different dimensions in reference to the graph-realistic structure any system can be described as. It allows to identify the exact components that are to be decentralized (subjects of decentralization) in accordance with the intended function of the system (contextual operationalism), as well as the topological relations defining how those components are structurally positioned and interconnected (structural relationalism). Take note that, in principle, competency questions themselves are not a part of our ontology, but are here to present external constraints for validation. With that said, to construct our competency questions we follow the methodology from \cite{gruningerfox_method}, where we first phrase them informally to then develop their formal instances in later sections:
\begin{itemize}
    % \item \textbf{CQ 1.} Which system components can be decentralized? (search for "self note" to see why are we removing this)
    \item \textbf{CQ 1.} When is a component considered to be centralized or decentralized?
    \item \textbf{CQ 2.} When is a component considered to be distributed?
    \item \textbf{CQ 3.} How many dimensions of decentralization does a system have?
    \item \textbf{CQ 4.} How do we derive a notion for centralized system from decentralization? 
    \item \textbf{CQ 5.} Can we state that a system is fully decentralized?
    \item \textbf{CQ 6.} Can a system be decentralized in one component but centralized in another?
    \item \textbf{CQ 7.} Can a distributed subject remain centralized with respect to an anchor?
\end{itemize}

Altogether, above competency questions make up for external constraints on acceptable models of the ontology.

\section{Logical Layer}

\subsection{Formalisms}

\noindent\textbf{Semantics.} We approach semantics for our ontology using competency questions and formalize their semantic expressions. Take note that we carry over the symbol notation throughout our semantics due to space limits. Let $U_s$ denote all subjects linked to $s$, and let $\mathcal{E}_s\subseteq U_s\times\mathcal{A}_s$ be the finite evaluation profile chosen for that system taking system subjects and system anchors of centralization $\mathcal{A}_s$. A pair $(u,a)\in\mathcal{E}_s$ means that subject $u$ is evaluated with respect to anchor $a$; it does not introduce another primitive role. The center-counting function $\mu(p_u,a)$, as well as other functions, are derived below.

\vspace{2mm}

\noindent\textit{\textbf{CQ 1.}} A subject is centralized or decentralized only relative to an anchor:
\begin{equation*}
    \begin{aligned}
        \mu(p_u,a)=1
            &\iff \textsc{Centralized}(u,a),\\
        \mu(p_u,a)>1
            &\iff \textsc{Decentralized}(u,a),\\
        \text{where:}\quad &(u,a)\in\mathcal{E}_s
        \wedge hasProjection(s,p_u)\\
        &{}\wedge ofSubject(p_u,u)\wedge inTopology(p_u,T)
    \end{aligned}
\end{equation*}
The phrase ``$u$ is decentralized/centralized'' is therefore shorthand to the anchor and evaluation profile $\mathcal{E}_s$, this must be supplied by context.  

\noindent\textit{\textbf{CQ 2.}} The notion of distribution in our ontology is defined by the subject realizations over vertices. We say that a subject of decentralization is \textit{distributed} if and only if the realizations associated with its projection particular are located over more than one vertex.
    \begin{equation*}
        \lambda(p_u) > 1 \iff \textsc{Distributed}(u), \quad \lambda(p_u)=|V_{p_u}|
    \end{equation*}
Where $V_{p_u}\subseteq V$ is a derived set of vertices supporting at least one realization associated with $p_u$ and the realization of $T$ is a graph $G(V,E)$. Making $\lambda$ independent of the interpretation given by an anchor particular $a$. 

\noindent\textit{\textbf{CQ 3.}} Dimensionality of decentralization in a given model is defined by the count of subjects under which it is decentralized:
\begin{equation*}
    \begin{aligned}
        D(s) & = \{ (u, a) \in \mathcal{E}_s\ |\ \mu(p_u, a) > 1 \}, \\
        dim(s, \mathcal{E}_s)&=|D(s)| \\
    \end{aligned}
\end{equation*}

\noindent\textit{\textbf{CQ 4.}} Centralization of a whole model is implied directly by the absence of decentralized subjects w.r.t the evaluation profile $\mathcal{E}_s$, meaning that the dimension of decentralization for a centralized system equals 0:
\begin{equation*}
    \begin{aligned}
        \textsc{Centralized}(s, \mathcal{E}_s) \iff\ & System(s)\ \wedge \\
        & \forall (u, a) \in \mathcal{E}_s\ (\mu(p_u, a) = 1), \\
        \textsc{Centralized}(s, \mathcal{E}_s) \implies\ & dim(s)=0
    \end{aligned}
\end{equation*}

\noindent\textit{\textbf{CQ 5.}} Now we semantically introduce a formal distinction between a system that has ``some'' decentralization and one that is fully decentralized. Full decentralization requires at least one subject projection and for that projection to be realized at more than one center for every anchor of centralization:
\begin{equation*}
    \begin{aligned}
        \textsc{Fully Decentralized}(s, \mathcal{E}_s) \iff \\
        & \hspace{-30pt} System(s)\ \wedge \\
        & \hspace{-30pt} \forall (u, a) \in \mathcal{E}_s\ (\mu(p_u, a) > 1)
    \end{aligned}
\end{equation*}
    
Similarly, a model is partially decentralized when it contains at least one centralized subject-anchor pair that is centralized and one that is decentralized:
\begin{equation*}
    \begin{aligned}
        \textsc{Partially Decentralized}(s, \mathcal{E}_u) \iff \\
        & \hspace{-70pt} System(s)\ \wedge \\
        & \hspace{-70pt} \exists (u,a) \in \mathcal{E}_s\ (\mu(p_u,a) = 1)\ \wedge \\
        & \hspace{-70pt} \exists (u,a) \in \mathcal{E}_s\ (\mu(p_u,a) > 1)
    \end{aligned}
\end{equation*}

\noindent\textit{\textbf{CQ 6.}} The partial case shows that distinct subject-anchor pairs in one system may receive different classifications. The same subject may also be evaluated under two separate anchors.

\noindent\textit{\textbf{CQ 7.}} A distributed subject may be realized across vertices that belong to a singular center in a given anchor. As such the following is satisfiable:

\begin{equation*}
    \begin{aligned}
        \lambda(p_u) > 1\ \wedge\ \mu(p_u, a) = 1
    \end{aligned}
\end{equation*}

Here, distribution still entails that realization must be greater than one, but ensures that the multiplicity itself $\delta(p_u) > 1$ is not within of itself a definition of decentralization.

\par

\vspace{2mm}

% We should also justify our modelling choices- most likely in axiomitizaiton?

\noindent\textbf{Axiomitization.} For clarity, axioms presented here serve as a ``specification'' by encoding formal constraints over our ontology's intended models in accordance with \cite{gruningerfox_method}. Where our semantics specify what it means for a subject to be centralized or for a system to be fully decentralized, and competency questions clarify the purpose and exact questions our ontology is meant to answer; in this section we specify what must hold in any model satisfying $\mathcal{O}$, ruling out interpretations that are formally consistent with $\Sigma$ but ontologically unintended. We use description-logic TBox statements for role typing and cardinality restrictions, supported by first-order constraints where relation between several role paths must be expressed. We follow the standard treatment of \cite{baader_description_2007} for TBox construction.\par

\noindent\textbf{Domain and Range.} We define every relation in $R$ through formal domain, range, and relational-structure constraints:
\begin{equation*}
    \begin{array}{l}
        \top \sqsubseteq \forall hasTopology^-.System \\
        \top \sqsubseteq \forall hasTopology.Topology \\
        \\[-2pt]
        \top \sqsubseteq \forall hasSubject^-.System \\
        \top \sqsubseteq \forall hasSubject.Subject \\
        \\[-2pt]
        \top \sqsubseteq \forall hasAnchor^-.System \\
        \top \sqsubseteq \forall hasAnchor.Anchor \\
        \\[-2pt]
        \top \sqsubseteq \forall hasProjection^-.System \\
        \top \sqsubseteq \forall ofSubject.Subject \\
        \top \sqsubseteq \forall inTopology.Topology \\
        \\[-2pt]
        \exists ofSubject.\top
            \sqsubseteq \exists hasProjection^-.System \\
        \exists inTopology.\top
            \sqsubseteq \exists hasProjection^-.System \\
        \exists hasRealization.\top
            \sqsubseteq \exists hasProjection^-.System \\
        \exists realizedAt.\top
            \sqsubseteq \exists hasRealization^-\top
    \end{array}
\end{equation*}
Since projection and realization particulars are not primitive concepts, their admissible extensions are derived from the relation patterns above rather than named as additional classes.

\noindent\textbf{Existence and Cardinality.} Existence of decentralizable subjects is not assumed by our ontology as a model with no subjects remains admissible and has dimensionality $0$. Consequently, no positive lower-bound axiom is imposed on $hasSubject$. Every system must, however, have exactly one topology:
\begin{equation*}
    System \sqsubseteq (=1\,hasTopology.Topology)
\end{equation*}
Because subjects are context-relative particulars, every subject belongs to exactly one system:
\begin{equation*}
    Subject \sqsubseteq (=1\,hasSubject^-.System)
\end{equation*}

Anchor particulars are also system-relative:

\begin{equation}
    Anchor \sqsubseteq (=1\, hasAnchor^1.Sysytem)
\end{equation}

With no lower bound imposed on $hasAnchor$ at the ontological layer. Every projection particular belongs to exactly one system and, within that system, has exactly one subject, exactly one topology, and at least one realization particular:
\begin{equation*}
    \begin{aligned}
        \exists hasProjection^-.\top
        &\sqsubseteq
        (=1\,hasProjection^-.System),
        \\
        System
        &\sqsubseteq
        \forall hasProjection.\bigl(
        \\
        &\quad (=1\,ofSubject.Subject)
        \\
        &\quad{}\sqcap (=1\,inTopology.Topology)
        \\
        &\quad{}\sqcap (\geq1\,hasRealization.\top)
        \bigr)
    \end{aligned}
\end{equation*}
Every realization particular belongs to exactly one projection particular and is located at exactly one vertex:
    \begin{equation*}
        \begin{aligned}
            \exists hasRealization^-.\top
            \sqsubseteq {}&
            (=1\,hasRealization^-.\top)
            \\
            &{}\sqcap (=1\,realizedAt.\top)
        \end{aligned}
    \end{equation*}
The following constraints ensure that the role fillers refer to the same system context and that every subject admitted by a system has exactly one projection particular:
    \begin{equation*}
        \begin{aligned}
            \forall s,u\,\bigl(& hasSubject(s,u) \\
            & \rightarrow\exists!p\,\bigl(hasProjection(s,p) \\
            & \hspace{31mm}{}\wedge ofSubject(p,u) \bigr)\bigr)
        \end{aligned}
    \end{equation*}

    \begin{equation*}
        \begin{aligned}
            \forall s,p\,\bigl(& hasProjection(s,p) \\
            & \rightarrow \exists!u\,\bigl(hasSubject(s,u) \wedge ofSubject(p,u) \bigr) \\
            & \phantom{\rightarrow}\wedge \exists!T\,\bigl(hasTopology(s,T)\wedge inTopology(p,T) \bigr)\bigr)
        \end{aligned}
    \end{equation*}

Finally, realization locations (vertices) must belong to the graph that realizes the topology associated with the same projection:
\begin{equation*}
    \begin{aligned}
        \forall p,T,r,v\,\bigl(& inTopology(p,T) \wedge hasRealization(p,r) \\
        & \wedge realizedAt(r,v) \\
        & \rightarrow v\in V_T \bigr)
    \end{aligned}
\end{equation*}
Where $V_T$ denotes the vertex set of the graph realizing $T$.

\noindent \textbf{Multiplicity, Placement, Anchors, and centers.} For each projection particular $p\in\Pi^I$, define a finite and non-empty realization set:
\begin{equation*}
    \mathcal{R}_{p}=\{r\in\Delta^I\mid(p,r)\in hasRealization^I\}
\end{equation*}
And the raw multiplicity function:
\begin{equation*}
    \delta:\Pi^I\longrightarrow\mathbb{N}_{>0},
    \qquad \delta(p)=|\mathcal{R}_{p}|
\end{equation*}
Thus $\delta$ counts solely realization particulars. The supporting vertices and their cardinality are derived separately:
\begin{align*}
    V_p & =\{v\in V_T\mid\exists r\in\mathcal{R}_p\,
        ((r,v)\in realizedAt^I)\},\\
    \lambda(p) & = |V_p|,
    \qquad \lambda:\Pi^I\longrightarrow\mathbb{N}_{>0}
\end{align*}
Because each realization has exactly one location while several realizations may share one location:
\begin{equation*}
    1\leq\lambda(p)\leq\delta(p)
\end{equation*}
Consequently, $\delta(p)>1$ with $\lambda(p)=1$ is satisfiable, and $\delta(p)=1$ entails $\lambda(p)=1$. These are claims about raw multiplicity and placement, not about decentralization.

For a system $s$, the set of anchor interpretations is:
\begin{align*}
    \mathcal{A}_s
    &=\{\textit{Ownership},\textit{Authority},\ldots\}\\
    &=\{a\in Anchor^I\mid(s,a)\in hasAnchor^I\}
\end{align*}
For each $a\in\mathcal{A}_s$, its centers are interpreted extensionally as a subset of vertices:
\begin{align*}
    \mathcal{C}_a
    &=\{\textit{Authority}_1,\textit{Authority}_2,\ldots\}\\
    &=\{V_{a,1},V_{a,2},\ldots,V_{a,n}\}
      \subseteq\mathcal{P}(V_T)
\end{align*}
Formally, each member of $\mathcal{C}_a$ is already a subset of $V_T$, so a separate $V_c$ definition or center role would be redundant.
An admissible evaluation supplies the finite mapping $a\mapsto\mathcal{C}_a$ for every anchor used in $\mathcal{E}_s$ as part of the graph-level interpretation, not as another primitive relation in $R$.

If no shared or external center is declared, the default family is:
\begin{equation*}
    \mathcal{C}_a^{\mathrm{default}}
    =\{\{v\}\mid v\in V_T\}\subseteq\mathcal{P}(V_T)
\end{equation*}
This also applies in cases where there exist externally defined centers but they are not covering the whole topology, the default case applies to all remaining vertices outside of these centers. Following an intuition that each vertex would function as its own internal center for that anchor (e.g., ownership declared for some nodes but not all, each remaining node in that case is an owner of itself). 

As such, if declared regions cover only part of the topology, every uncovered vertex is added as an independent singleton effective center. Henceforth, $\mathcal{C}_a$ denotes this completed family, so:
\begin{equation*}
    V_{p_u}\subseteq
    \bigcup_{V_c\in\mathcal{C}_a}V_c=V_T
\end{equation*}

The anchor-relative center multiplicity of a subject projection is then:
\begin{equation*}
    \begin{aligned}
        \mu(p_u,a)
        & = \bigl|\{V_c\in\mathcal{C}_a\mid V_c\cap V_{p_u}\neq\varnothing\}\bigr|,\\[5pt]
        \hspace{5pt}V_c\cap V_{p_u}\neq\varnothing
        & \iff\exists r\,\exists v\,\bigl(
            hasRealization(p_u,r)\\
        &\hspace{28pt}\wedge realizedAt(r,v)\wedge v\in V_c\bigr)
    \end{aligned}
\end{equation*}
We use existential quantifiers to state that a center is counted when at least one realization of $p_u$ occurs at at least one vertex in that region.

\vspace{2mm}

\noindent\textbf{Refined Concepts.} For every declared pair $(u,a)\in\mathcal{E}_s$, completed coverage, and non-empty subject support ensure $\mu(p_u,a)\in\mathbb{N}_{>0}$. We then infer:
\begin{align*}
    \textsc{Centralized}(u,a)
        & \iff Subject(u)\wedge\mu(p_u,a)=1,\\
    \textsc{Decentralized}(u,a)
        & \iff Subject(u)\wedge\mu(p_u,a)>1,\\
    \textsc{Distributed}(u)
        & \iff Subject(u)\wedge\lambda(p_u)>1
\end{align*}
For a fixed evaluation profile $\mathcal{E}_s$, each declared pair is classified exhaustively and exclusively, where a single subject cannot be classified as both centralized and decentralized simultaneously w.r.t the same anchor.  System-level \textsc{Centralized}, \textsc{Partially Decentralized}, and \textsc{Fully Decentralized} classifications are the profile-relative predicates defined under CQ 4 and CQ 5. Comparisons are valid only when the declared evaluation profiles are compatible.

\vspace{2mm}

\noindent\textbf{Reasoning Properties.} We validate the reasoning properties of our ontology using standard approaches by showing: consistency, satisfiability, disjointness, and entailment.

\vspace{2mm}

\noindent\textbf{Definition 1. (\textit{Consistency}).}
\textit{Let $\mathcal{O}$ be an ontology with a set of axioms $\Phi$. The ontology
$\mathcal{O}$ is said to be \emph{consistent} if and only if there exists a
structure $\mathcal{M}$ and an interpretation $\mathrm{I}$ such that
$\mathcal{M}$ under $\mathrm{I}$ satisfies every axiom in $\Phi$. Formally:}
\begin{equation*}
\begin{aligned}
    \operatorname{Consistent}(\mathcal{O})
    &\iff
    \exists \textless \mathcal{M},\, \mathrm{I}\textgreater
    \text{ such that }
    \textless \mathcal{M},\, \mathrm{I}\textgreater
    \vDash \mathcal{O}
    \\
    &\iff
    \exists \textless \mathcal{M},\, \mathrm{I}\textgreater
    \text{ such that }
    \forall \varphi \in \Phi,
    \\
    &\hspace{22mm}
    \textless \mathcal{M},\, \mathrm{I}\textgreater
    \vDash \varphi
\end{aligned}
\end{equation*}

\vspace{2mm}

\noindent \textbf{\textit{Proof.}} To prove consistency we are required to show that there exists an interpretation that models our ontology:
    \begin{equation*}
        \exists \textless \mathcal{M},\, \mathrm{I}\textgreater
        \vDash \mathcal{O}
    \end{equation*}
Construct an interpretation $\mathrm{I}=\textless\Delta^I,\cdot^I\textgreater$ and a witness structure $\mathfrak{W}=\{s,t,U,A,P,Q,V\}$, where $s\in System^I$, $t\in Topology^I$ realized by a non-empty finite graph $G(V,E)$, $U=\{u_1,\ldots,u_n\}\subseteq Subject^I$, $A=\{a_1,\ldots,a_k\}\subseteq Anchor^I$, $P=\{p_1,\ldots,p_n\}\subseteq\Pi^I$, and $Q=\bigcup_i\mathcal{R}_{p_i}$. For every $u_i$, let $p_i$ be its unique projection particular, let $\mathcal{R}_{p_i}$ be finite and non-empty, and let $\ell:Q\rightarrow V$ assign each realization to exactly one vertex in this particular case. For every $a_j$, choose a completed finite family $\mathcal{C}_{a_j}\subseteq\mathcal{P}(V)$ whose union is $V$, and choose a finite profile $\mathcal{E}_s\subseteq U\times A$. Role interpretations are:
    \begin{equation*}
        \begin{array}{l}
             hasTopology^I = \{(s,t)\}, \\
             hasSubject^I = \{(s,u_i) : u_i \in U\}, \\
             hasAnchor^I = \{(s,a_j) : a_j \in A\}, \\
             hasProjection^I = \{(s,p_i) : p_i \in P\}, \\
             ofSubject^I = \{(p_i,u_i) : 1\leq i\leq n\}, \\
             inTopology^I = \{(p_i,t) : p_i \in P\}, \\
             hasRealization^I =
             \{(p_i,r) : p_i\in P,\ r\in\mathcal{R}_{p_i}\}, \\
             realizedAt^I = \{(r,\ell(r)) : r\in Q\}
        \end{array}
    \end{equation*}
Under this interpretation, every subject and anchor belongs to one system, every projection has one system, subject, and topology and at least one finite realization, and every realization has one topological location. Completion of each center family makes $\mu$ a positive integer for every declared pair. All axioms in $\Phi$ are therefore satisfied:
    \begin{equation*}
        \textless \mathfrak{W},\,\mathrm{I}\textgreater
        \vDash \mathcal{O}
    \end{equation*}
Hence, the ontology is consistent.
\hfill\(\square\)

\vspace{2mm}

\noindent\textbf{Definition 2. (\textit{Disjointness})} \textit{Let $\mathcal{O}$ be an ontology, and let $C$ and $D$ be concepts in the signature of $\mathcal{O}$. The concepts $C$ and $D$ are said to be \emph{disjoint} with respect to $\mathcal{O}$ if and only if their interpretations have no common instances in every model of $\mathcal{O}$. Formally:}
\begin{equation*} 
    \begin{aligned} 
        \operatorname{Disjoint}_{\mathcal{O}}(C,D) &\iff \mathcal{O} \vDash C \sqcap D \sqsubseteq \bot \\ 
        &   \iff \forall \textless \mathcal{M},\, \mathrm{I}\textgreater, \\ 
        &   \hspace{10mm} \textless \mathcal{M},\, \mathrm{I}\textgreater \vDash \mathcal{O} \Longrightarrow C^I \cap D^I = \varnothing.
    \end{aligned} 
\end{equation*}

\vspace{2mm}

\noindent\textbf{\textit{Proof.}} Fix a system $s$ and evaluation profile $\mathcal{E}_s$. For each $(u,a)\in\mathcal{E}_s$, complete center coverage (guaranteed by the default $C_a^{default}$), and non-empty subject support yield $\mu(p_u,a)\in\mathbb{N}_{>0}$. Exactly one of $\mu(p_u,a)=1$ and $\mu(p_u,a)>1$ can holds, therefore:
\begin{equation*}
    \textsc{Centralized}(u,a)\wedge
    \textsc{Decentralized}(u,a)\rightarrow\bot.
\end{equation*}
The profile-relative system concepts are disjoint for the same reason given that a non-empty profile cannot simultaneously have every entry above one, every entry equal to one, or a mixture of the two.

Crucially, \textsc{Distributed} and anchor-relative \textsc{Centralized} are \emph{not} disjoint. Let $V_{p_u}=\{v_1,v_2\}$ and $\mathcal{C}_a=\{V_T\}$. Then $\lambda(p_u)=2$ while $\mu(p_u,a)=1$, so:
\begin{equation*}
    \textsc{Distributed}(u)\wedge
    \textsc{Centralized}(u,a)
\end{equation*}
Is satisfiable. If instead the completed family contains the two singleton regions $\{v_1\}$ and $\{v_2\}$, then $\lambda(p_u)=\mu(p_u,a)=2$ and $\textsc{Distributed}(u)\wedge\textsc{Decentralized}(u,a)$ is satisfiable.

Hence, the required class disjointness holds without making distribution and centralization mutually exclusive.
\hfill\(\square\)

\vspace{2mm}

\noindent\textbf{Definition 3. (\textit{Satisfiability})} \textit{Let $\mathcal{O}$ be an ontology, and let $C$ be a concept in the signature of $\mathcal{O}$. The ontology $\mathcal{O}$ is said to be \emph{satisfiable} if and only if it admits at least one model. The concept $C$ is said to be \emph{satisfiable} with respect to $\mathcal{O}$ if and only if its interpretation is non-empty in at least one model of $\mathcal{O}$. Formally:}
\begin{equation*}
    \begin{aligned}
        \operatorname{Satisfiable}(\mathcal{O}) & \iff \exists \textless \mathcal{M},\, \mathrm{I}\textgreater, \\
        & \hspace{10mm} \textless \mathcal{M},\, \mathrm{I}\textgreater
        \vDash \mathcal{O}, \\
        \operatorname{Satisfiable}_{\mathcal{O}}(C) & \iff \mathcal{O} \not\vDash C \sqsubseteq \bot \\
        & \iff \exists \textless \mathcal{M},\, \mathrm{I}\textgreater, \\
        & \hspace{10mm} \textless \mathcal{M},\, \mathrm{I}\textgreater \vDash \mathcal{O} \quad \text{and} \quad C^I \neq \varnothing
    \end{aligned}
\end{equation*}

\noindent\textbf{\textit{Proof.}} The first expression is equivalent to consistency and therefore holds for our ontology. In the witness model $\mathfrak{W}$, all domain, range, existence, and cardinality constraints are satisfied. Each projection has a finite non-empty realization set and each realization has exactly one location, so $\delta(p_i)\geq1$ and $\lambda(p_i)\geq1$. Each completed center family covers $V_T$, so $\mu(p_i,a_j)\geq1$ whenever $(u_i,a_j)\in\mathcal{E}_s$.

Both subject-specific classifications are satisfiable. Choose $\mathcal{C}_{a_j}=\{V_T\}$ to obtain $\mu(p_i,a_j)=1$ and hence $\textsc{Centralized}(u_i,a_j)$. Choose a projection supported at $v_1$ and $v_2$ and use the singleton default family to obtain $\mu(p_i,a_j)=2$ and hence $\textsc{Decentralized}(u_i,a_j)$. The latter witness also satisfies $\textsc{Distributed}(u_i)$ because $\lambda(p_i)=2$; retaining the same two-vertex support but replacing the singleton family with whole of $V_T$ witnessing a distributed yet centralized subject w.r.t the anchor of centralization.

The three system-level concepts are satisfiable by choosing, respectively, a profile containing only $\mu=1$ entries, a non-empty profile containing only $\mu>1$ entries, or a profile containing both. An empty evaluation profile remains centralized by the vacuous universal condition and has $dim(s, \mathcal{E}_s)=0$. Therefore, all derived concepts are satisfiable within $\mathcal{O}$.
\hfill\(\square\)
    
\vspace{2mm}

\noindent\textbf{Definition 4. (\textit{Entailment})} \textit{Let $\mathcal{O}$ be an ontology, and let $\alpha$ be an axiom or assertion expressed in the signature of $\mathcal{O}$. The ontology $\mathcal{O}$ is said to \emph{entail} $\alpha$ if and only if every model of $\mathcal{O}$ is also a model of $\alpha$. Formally:}
\begin{equation*}
    \begin{aligned}
        \mathcal{O} \vDash \alpha
        &\iff
        \forall \textless \mathcal{M},\, \mathrm{I}\textgreater,
        \\
        &\hspace{10mm}
        \textless \mathcal{M},\, \mathrm{I}\textgreater
        \vDash \mathcal{O}
        \Longrightarrow
        \textless \mathcal{M},\, \mathrm{I}\textgreater
        \vDash \alpha.
    \end{aligned}
\end{equation*}
\vspace{2mm}

\noindent\textbf{\textit{Proof.}} Let $\mathfrak{W} = \{s,t,U,A,P,Q,V\}$ satisfy $\mathcal{O}$, and fix $\mathcal{E}_s\subseteq U\times A$. For every declared pair, completed coverage entails $\mu(p_u,a)\geq1$. The definitions therefore entail exactly one pair-level classification according to whether $\mu(p_u,a)=1$ or $\mu(p_u,a)>1$. Applying the same exhaustive split to all entries of $\mathcal{E}_s$ entails the corresponding centralized, fully decentralized, or partially decentralized system classification and fixes $dim(s, \mathcal{E}_s)$ as the number of entries above one.

Distribution is entailed independently from $\lambda$:
\begin{equation*}
    \mathcal{O}\vDash
    (\textsc{Distributed}(u)\iff\lambda(p_u)>1 )
\end{equation*}
There is no entailment from this predicate to either anchor-relative class without $\mathcal{C}_a$. In particular, a single region containing every supporting vertex entails $\mu(p_u,a)=1$ regardless of how many such vertices exist. \hfill\(\square\)

\vspace{2mm}

\noindent\textbf{Evaluation.} We approach evaluation of our ontology twofold: (a) we evaluate our competency questions so that we can show how our ontology can actually provide answers to them, and (b) we provide instantiations of our ontology using architectural models such as Block-chain, and Federated Learning. 

\vspace{2mm}

\noindent\textbf{\textit{Competency Questions Evaluation.}} \textbf{CQ 1} is answered by $\mu(p_u,a)=1$ and $\mu(p_u,a)>1$, which are necessary and sufficient conditions for anchor-relative centralization and decentralization. \textbf{CQ 2} is answered independently by $\lambda(p_u)>1$, while $\delta(p_u)$ retains only its raw-multiplicity meaning.

\textbf{CQ 3} is answered by $dim(s,\mathcal{E}_s)=|\{(u,a)\in\mathcal{E}_s\mid\mu(p_u,a)>1\}|$. \textbf{CQ 4} follows because a profile with no decentralized entries has dimension zero and satisfies $\textsc{Centralized}(s,\mathcal{E}_s)$. \textbf{CQ 5} follows from the universal and mixed conditions over the same profile. These classifications are exhaustive and exclusive once $\mathcal{E}_s$ and the completed centers are fixed.

\textbf{CQ 6} is witnessed whenever one declared pair has $\mu=1$ and another has $\mu>1$, and \textbf{CQ 7} by a multi-vertex subject whose supporting vertices all lie within one center region: $\lambda(p_u)>1$ and $\mu(p_u,a)=1$.

\vspace{1mm}

\noindent\textbf{\textit{Instantiations.}} We first evaluate the examples ontologically and later reuse them in the analytical layer. Each example declares an evaluation profile and the center families needed for its classifications; $\delta$ and $\lambda$ are also reported to keep realization count and placement explicit.

\noindent\emph{Instantiation 1: Vanilla Federated Learning.} Consider a star topology in which $v_1$ is the aggregation server and $v_2$-$v_5$ are clients holding private data and performing local training. Let $a_{own}$ interpret ownership and $a_{auth}$ interpret authority, and use the singleton completed families $\mathcal{C}_{a_{own}}=\mathcal{C}_{a_{auth}}=\{\{v_i\}\mid1\leq i\leq5\}$. Meanwhile $u_{data}$, $u_{train}$, and $u_{Agg}$ denote data, training, and aggregation locations respectively.  We declare $\mathcal{E}_{FL}=\{(u_{data},a_{own}),(u_{train},a_{auth}),(u_{Agg},a_{auth})\}$:
\begin{itemize}
    \item $u_{\text{data}}$: $\delta(p_{data})=\lambda(p_{data})=4$ and $\mu(p_{data},a_{own})=4$, hence $\textsc{Decentralized}(u_{data},a_{own})$.
    \item $u_{\text{train}}$: $\delta(p_{train})=\lambda(p_{train})=4$ and $\mu(p_{train},a_{auth})=4$, hence $\textsc{Decentralized}(u_{train},a_{auth})$.
    \item $u_{\text{Agg}}$: $\delta(p_{Agg})=\lambda(p_{Agg})=1$ and $\mu(p_{Agg},a_{auth})=1$, hence $\textsc{Centralized}(u_{Agg},a_{auth})$.
\end{itemize}

Thus $dim(s_{FL},\mathcal{E}_{FL})=2$ and the system is $\textsc{Partially Decentralized}(s_{FL},\mathcal{E}_{FL})$: data ownership and training authority span several effective centers, while aggregation authority remains within one. This corresponds to conventional cross-silo federated learning in which data and local computation are spread across clients while global aggregation remains under one authority \cite{kairous2021_advances,chou_efficient_2021}. Data and training are also \textsc{Distributed}; aggregation authority is not.

\vspace{1mm}

\noindent\emph{Instantiation 2: Decentralized Federated Learning.} Retain the profile and singleton center families, but let all five clients hold data and train locally and let aggregation authority be realized at $v_1$, $v_2$, and $v_3$. Under this interpretation the three supporting vertices are distinct authority centers:
\begin{itemize}
    \item $u_{\text{data}}$: $\delta(p_{data})=\lambda(p_{data})=5$ and $\mu(p_{data},a_{own})=5$.
    \item $u_{\text{train}}$: $\delta(p_{train})=\lambda(p_{train})=5$ and $\mu(p_{train},a_{auth})=5$.
    \item $u_{\text{Agg}}$: $\delta(p_{Agg})=\lambda(p_{Agg})=3$ and $\mu(p_{Agg},a_{auth})=3$.
\end{itemize}

Every declared pair has $\mu>1$, so $dim(s_{FL},\mathcal{E}_{FL})=3$ and $\textsc{Fully Decentralized}(s_{FL},\mathcal{E}_{FL})$ holds. All three subjects are also \textsc{Distributed}. Note that this conclusion depends on the declared authority centers, not merely on the three aggregation locations. That is, if $\{v_1,v_2,v_3\}$ were one authority region, aggregation would remain distributed but centralized with respect to $a_{auth}$.

\vspace{1mm}

\noindent\emph{Instantiation 3: Blockchain 1.} Consider a permissioned blockchain over seven nodes. Node $v_2$ is the sole ordering authority, the four core nodes replicate the ledger, and $\{v_1,v_6,v_7\}$ submit transactions. Let $a_{auth}$ interpret authority, let $a_{own}$ interpret ownership, and declare $\mathcal{E}_{BC}=\{(u_{cons},a_{auth}),(u_{ledger},a_{own}),(u_{tx},a_{auth})\}$. With singleton effective centers for the stated independent entities:
\begin{itemize}
    \item $u_{\text{cons}}$: $\delta(p_{cons})=\lambda(p_{cons})=\mu(p_{cons},a_{auth})=1$, hence anchor-centralized.
    \item $u_{\text{ledger}}$: $\delta(p_{ledger})=\lambda(p_{ledger})=\mu(p_{ledger},a_{own})=4$, hence anchor-decentralized.
    \item $u_{\text{tx}}$: $\delta(p_{tx})=\lambda(p_{tx})=\mu(p_{tx},a_{auth})=3$, hence anchor-decentralized.
\end{itemize}

Accordingly, $dim(s_{BC},\mathcal{E}_{BC})=2$ and the system is partially decentralized relative to that profile. Ledger replication and transaction submission are distributed, while consensus authority is not.

\vspace{1mm}

\noindent\emph{Instantiation 4: Blockchain 2.} Retain the topology and profile, but let the four core nodes jointly validate and order blocks, every node replicate the ledger, and $v_3,v_4$ expose transaction submission. Under singleton effective centers:
\begin{itemize}
    \item $u_{\text{cons}}$: $\delta(p_{cons})=\lambda(p_{cons})=\mu(p_{cons},a_{auth})=4$.
    \item $u_{\text{ledger}}$: $\delta(p_{ledger})=\lambda(p_{ledger})=\mu(p_{ledger},a_{own})=7$.
    \item $u_{\text{tx}}$: $\delta(p_{tx})=\lambda(p_{tx})=\mu(p_{tx},a_{auth})=2$.
\end{itemize}

All declared pairs have $\mu>1$, so $dim(s_{BC},\mathcal{E}_{BC})=3$ and the system is fully decentralized relative to $\mathcal{E}_{BC}$; all three subjects are also distributed. Again, if the four validators share one authority center $V_c=\{v_2,v_3,v_4,v_5\}$, their unchanged values $\delta(p_{cons})=\lambda(p_{cons})=4$ coexist with $\mu(p_{cons},a_{auth})=1$. Consensus is then distributed but authority-centralized, and the system remains partially rather than fully decentralized under the same profile.

\vspace{1mm}

\noindent These paired instantiations demonstrate that decentralization is evaluated over subject-anchor pairs rather than assigned as an undifferentiated property of the system. Two systems with identical subject support may receive different anchor-relative classifications when their center interpretations differ. This is precisely the distinction that raw realization or vertex counts cannot recover.

\noindent Meanwhile, we can observe that the three functions have non-interchangeable roles. The value $\delta(p_u)$ counts distinct realization particulars, $\lambda(p_u)$ counts the vertices supporting them, and $\mu(p_u,a)$ counts the center regions reached under $a$. Thus $\delta(p_u)=4$ and $\lambda(p_u)=1$ describe several co-located realizations but do not, without an anchor interpretation, entail decentralization. Conversely, $\lambda(p_u)=4$ with $\mu(p_u,a)=1$ describes a distributed but centralized subject.

\noindent This captures deployment arrangements obscured by node counts alone. A cryptocurrency network may run several protocol nodes on distinct hosts that remain within one data center or under one operator \cite{gencer_2018_decentralization_bitcoin}. The hosts contribute to $\lambda$, while the declared ownership or authority region contributes only one to $\mu$. If no shared center is declared, the singleton completion convention instead treats every vertex as its own effective center.

\noindent The distinction also explains why decentralization admits partially and fully decentralized profile-relative system classes, whereas distribution is not assigned a partially distributed class. Following Sassoon's account of multidimensional adjectives, the relevant dimensions may be bound conjunctively, disjunctively, or contextually \cite{sassoon_typology_2013}. Here, our ontology records those dimensions explicitly in $\mathcal{E}_s$, and $dim(s,\mathcal{E}_s)$ counts the entries whose center multiplicity exceeds one.

\noindent For distribution, once the topology and its vertices have been fixed, it is the bivalent placement condition $\lambda(p_u)>1$ which decides whether a subject is distributed or not. There is no distribution analogue of $dim(s,\mathcal{E}_s)$ and therefore no intermediate \textsc{Partially Distributed} system class, reflecting common uses of ``distributed computing'' \cite{nakamoto_bitcoin_2008, distribution_reinforcement_abadi_2024, distributed_systems_textbook}. Decentralization is dimensional across declared subjects, whereas distribution records extensional topological placement.

\subsection{Definition for Decentralization}

Having constructed the ontology itself, we propose a definition for decentralization:

\vspace{2mm}

%\noindent\textbf{Definition 5. (\textit{Decentralization})} \textit{Let $\mathcal{S}$ be a set of systems with an evaluation profile $\mathcal{E}_s$, $\mathcal{A}$ denoting a set of centralization anchors, $\mathcal{U}$ a set of subjects where each subject $u$ is projected over a system topology $T$ of order $t$, with each projection $p \in \mathcal{R}_p$ realized at $n \leq t$ vertices in $T$. An arbitrary subject $u \in \mathcal{U}$ belonging to a system $s \in \mathcal{S}$ with a projection $p$ is said to be \textbf{decentralized} with respect to a given anchor $a \in \mathcal{A}$ if and only if it satisfies the condition $\mu(p_u, a) > 1$.}

\noindent\textbf{Definition 5. (\textit{Decentralization})} \textit{Let $\mathcal{S}$ be a set of systems. For each $s \in \mathcal{S}$, let $\mathcal{U}_s$ be its set of subjects, $\mathcal{A}_s$ its set of centralization anchors, $T_s$ its topology of order $t_s$, and $\mathcal{E}_s \subseteq \mathcal{U}_s \times \mathcal{A}_s$ its evaluation profile. For each $u \in \mathcal{U}_s$, let $p_u$ denote the unique projection of $u$ onto $T_s$, with supporting vertex set $V_{p_u}$ satisfying $1 \leq \lambda(p_u)=|V_{p_u}| \leq t_s$. For any declared pair $(u,a) \in \mathcal{E}_s$, the subject $u$ is said to be \textbf{decentralized} with respect to $a$ if and only if $\mu(p_u, a) > 1$.}

\par

% What can we infer, move this back to logical layer, idea to be included: Distributed-but-centralised architectures, and how can we come to the conclusion that they exist

%\noindent\textbf{Logical Instantiations}
%\par
% blockchain,
% FL,
% DAGs,
% gossip networks,
% hybrid systems? agentic AI?

% Important, since we are using graphs, also part of the modelling
\section{Modeling Layer} % Topological representations 

\noindent\textbf{Topological Representation.} The ontological commitment to graph-theoretic realism entails that any system admitted by $\mathcal{O}$ must have a topology $T$ whose realization is a graph $G = (V, E)$. This is strictly to link the previous layers to analytical layer, such that there exists a mapping from our logical formalizations to computational representations that can be evaluated analytically. 

Crucially, our ontology imposes no constraint on the choice of representational formalism, provided that the chosen structure faithfully encodes the graph $G$. An adjacency matrix $A \in \{0,1\}^{|V| \times |V|}$, an edge list $E \subseteq V \times V$, or any equivalent structure each constitute admissible realizations of $T$, as they all preserve the relational structure required for subject projection. The modeling layer therefore acts as a binding between the abstract topology $T$ defined at the ontological level and its concrete computational instantiation.

\vspace{2mm}

\noindent\textbf{Subject Support and center Regions.} At the ontological level, subject $u$ and topology $T$ are connected through their unique projection particular $p_u$. At the modeling layer, $realizedAt$ derives the non-empty support $V_{p_u}\subseteq V$, its cardinality is $\lambda(p_u)$, while $\delta(p_u)=|\mathcal{R}_{p_u}|$ remains the distinct realization count. We write $V_u$ for $V_{p_u}$ for clearer notation in this paper. Computationally, $V_u$ is a vertex set or binary indicator vector.

An anchor $a$ is modeled by a family $\mathcal{C}_a\subseteq\mathcal{P}(V)$, implementable as a collection of vertex-index sets or a center-vertex incidence matrix. Missing coverage is completed with singleton sets before evaluation. center multiplicity is then the number of rows or sets intersecting $V_u$:
\begin{equation*}
    \mu(p_u,a)=|\{V_c\in\mathcal{C}_a\mid V_c\cap V_u\neq\varnothing\}|.
\end{equation*}
This graph-realistic representation is why an additional center object role is unnecessary.

\vspace{2mm}

\noindent\textbf{Shared Vertices, Subjects, and centers.} A vertex may support several subjects, and a subject may have several realizations at one or more vertices. Once again highlighting that $\delta$ and $\lambda$ but not $\mu$. A multi-vertex support of a subject is distributed, and whether it is decentralized depends on how those vertices intersect $\mathcal{C}_a$. The $1\leq\lambda(p_u)\leq\delta(p_u)$ therefore remains valid, while no inference from $\lambda(p_u)>1$ to $\mu(p_u,a)>1$ is allowed.

\section{Analytical Layer}
%Independent metrics, and measurable outcomes, what claims can we make?
\noindent\textbf{Graph-based Metrics.} Building on the modeling layer, we introduce two analytical operators for a declared pair $(u,a)\in\mathcal{E}_s$. The anchor determines whether the pair passes the logical decentralization boundary through $\mu(p_u,a)>1$; conditional on that boundary, the existing graph operators evaluate the placement and connectivity of $V_u$. Both return values in $[0,1]$.

The first metric, \textit{Void Tolerance} $T_L$, measures vertex-focused decentralization by assessing the resilience of the subject-induced subgraph to vertex removal. Concretely, it asks how many vertices exist whose deletion would isolate a portion of the subgraph, and how large that isolated portion could be. The second operator, \textit{Imperviousness} $I_L$, measures edge-focused decentralization by assessing the minimum number of edge deletions required to compromise the connectivity of any subject-specific vertex, relative to the scale of the topology.

% CHANGE V_U TO PROJECTION COUNT !!!!
\begin{proposition}
Given a system topology $G_t$ and a declared pair $(u,a)$, Void Tolerance is defined as:

% replace |u^I| with \delta(p) 
\begin{equation}
    \begin{aligned}
        T_L(u,a) & = 
        \begin{cases}
            0   & \text{if} \quad \mu(p_u,a) = 1, \\
            1   & \text{if} \quad |G_s| = 0,\, |G_t| > 1,\, \lambda(p_u)>1, \\
            e^{-\frac{r_v^2}{|G_{s}|^{-\epsilon}|G_{t}|}} & \text{otherwise}
        \end{cases} \\[5pt]
        & \text{where} \quad |G_t| > |G_s|, \quad \epsilon > 0
    \end{aligned}
\end{equation}
Where $|G_t|$ is the order of the system topology, $|G_s|$ is the sum order of the isolated subgraphs, $\lambda(p_u)$ is the number of distinct subject-supporting vertices, $r_v$ is the number of vertices whose removal \textit{can} isolate a subgraph, $\mu(p_u,a)$ is the anchor-relative center count, and $\epsilon$ is a subject-anchor weight factor.
% (away from at least a half of corresponding subject specific vertices)

When deciding on the size of the subgraph that is removed (that is, which side is the ``removed'' one), priority is given to the distribution of subject-specific vertices. Following the removal of a vertex, the connected component containing the greatest number of subject-specific vertices is selected as the reference component. All remaining connected components are collectively treated as the isolated subgraph. Thus, if the removed vertex connects three or more components, the component containing the greatest number of subject-specific vertices serves as the reference point, while the vertices of all other components are combined when determining the order of the isolated subgraph. If two or more components contain the same maximal number of subject-specific vertices, the component of greatest order is selected as the reference component. This convention reflects the subject-oriented nature of the metric as the reference component represents the portion of the topology that remains accessible to the greatest concentration of subject realizations.

%any remaining tie may be resolved arbitrarily, as it does not affect the resulting isolated order when the tied components also have equal order.

\end{proposition}
%\newpage
\begin{proposition}
Given a system topology $G_t$ and a declared pair $(u,a)$, Imperviousness is defined as:
    \begin{equation}
    \begin{aligned}
        I_L(u,a) &=
        \begin{cases}
            0 & \text{if} \ |E(G_t)| = 0 \ \text{or} \ \mu(p_u,a)=1,\\
            1 & \text{if} \ r_e=|G_t|-1 \ \text{and} \ \lambda(p_u) > 1,\\
            e^{1-\left( \frac{\lambda(p_u)}{|G_t|} \right)^{-\frac{\epsilon^2}{r_e}}}
            & \text{otherwise}
        \end{cases} \\[5pt]
        & \text{where} \quad |G_t| \ge \lambda(p_u), \quad \epsilon>0
    \end{aligned}
    \end{equation}

Where $\lambda(p_u)$ is the number of distinct vertices supporting the subject projection, $r_e$ is the minimum number of edges whose removal compromises subject functionality, and $\epsilon$ is the subject-anchor weight factor.
\end{proposition}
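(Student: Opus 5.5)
Since the statement is a defining proposition, the thing to prove is that $I_L$ is well defined and behaves as the text claims: it returns a value in $[0,1]$, every declared pair falls into some case, and the closed form increases with edge robustness. I will proceed in four steps, mirroring the reasoning-property proofs for Definitions 1--4. Throughout, let $n=|G_t|$ and $x=\lambda(p_u)/n$.

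\textbf{Domain and case coverage.} The modeling layer gives $1\leq\lambda(p_u)\leq n$, so $x\in(0,1]$. By the Refined Concepts, each declared pair has $\mu(p_u,a)\in\mathbb{N}_{>0}$, so the guard $\mu(p_u,a)=1$ is decidable. Next I bound $r_e$. Isolating a single subject-specific vertex $v$ requires deleting its incident edges, so $r_e\leq\deg(v)\leq n-1$. Moreover $r_e=n-1$ forces every subject vertex to have full degree. The cases overlap when $r_e=n-1$ and $\lambda(p_u)>1$ but $\mu(p_u,a)=1$; I read the first case as taking priority, consistent with the $T_L$ definition.

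\textbf{The remaining case needs $r_e\geq 1$, which is the main obstacle.} If $G_t$ has edges but a subject vertex is already disconnected, then $r_e=0$ and the exponent $-\epsilon^2/r_e$ is undefined. I would resolve this by taking the limit $r_e\to0^+$. For $x<1$, $x^{-\epsilon^2/r_e}\to\infty$, so the value tends to $0$, and I would define $I_L=0$ there. For $x=1$ the value is identically $1$. Alternatively one adds $r_e\geq1$ as a standing hypothesis, which holds whenever $G_t$ is connected. I would state this convention explicitly before bounding anything.

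\textbf{Range and monotonicity.} For $r_e\geq1$, set $c=\epsilon^2/r_e>0$. Since $x\leq 1$, we get $x^{-c}\geq 1$, so the exponent $1-x^{-c}\leq 0$ and $I_L\in(0,1]$. Equality $I_L=1$ holds exactly when $x=1$. I then take partial derivatives of $f(r_e,x)=e^{1-x^{-\epsilon^2/r_e}}$:
\begin{itemize}
\item Increasing $r_e$ lowers $c$, which lowers $x^{-c}$ toward $1$, so $f$ is non-decreasing in $r_e$. More edges to cut means more impervious.
\item Increasing $\lambda(p_u)$ raises $x$, so $f$ is non-decreasing in $\lambda(p_u)$. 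Wider subject support means more impervious.
\item Increasing $\epsilon$ lowers $f$, so $\epsilon$ acts as a strictness weight.
\end{itemize}
Together with the two constant cases, this gives $I_L\in[0,1]$.

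\textbf{Boundary behaviour.} The value $1$ in the second case is a normalisation, not the limit of the closed form. At $r_e=n-1$ the formula gives $e^{1-x^{-\epsilon^2/(n-1)}}$, which is strictly below $1$ unless $\lambda(p_u)=n$. I would record this jump and show that it preserves monotonicity: $1$ dominates every closed-form value, and $r_e=n-1$ is the maximal attainable $r_e$. Hence the completely connected subject setting is correctly ranked as maximally impervious.
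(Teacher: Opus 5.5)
Your proposal is correct, and it takes a different and more explicit route than the paper. The paper gives no separate argument for this definitional proposition beyond the remark that follows it. There, the first case forces $0$ whenever $\mu(p_u,a)=1$, and the upper bound is ``recovered via limits'', $\lim_{r_e\to+\infty}e^{1-(\lambda(p_u)/|G_t|)^{-\epsilon^2/r_e}}=1$, with the second case supplying $1$ at the practical cap $r_e=|G_t|-1$.

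You instead bound the closed form pointwise: $x\le 1$ and $c>0$ give $x^{-c}\ge 1$, so the exponent is non-positive. This yields several things the limit argument does not:
\begin{itemize}
\item $I_L\in(0,1]$ for every $r_e\ge 1$.
\item The value $1$ is actually attained, exactly when $\lambda(p_u)=|G_t|$. The paper's FL-2 and BC-2 ledger computations use this, which sits uneasily with its ``rather than an explicit construction'' phrasing.
\item The second case is a normalisation that introduces a jump, not the endpoint of the limit.
\item Your sign analysis justifies reading $\epsilon$ as the penalty-amplifying weight the paper describes.
\end{itemize}

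You also surface two things the paper leaves implicit. The first is the overlap of the first two cases when $\mu(p_u,a)=1$; the paper resolves it as you do, assigning $0$ ``before evaluating topology-dependent extent''. The second is the undefined exponent when $|E(G_t)|>0$ but a supporting vertex is isolated, so $r_e=0$. The paper never handles this, and your range claim genuinely needs a fix for it. Of your two fixes, the connectivity hypothesis is cleaner, because the $r_e\to0^+$ convention returns $1$ when $\lambda(p_u)=|G_t|$ despite the isolated vertex.

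Two smaller points of wording. Deleting the edges incident to $v$ \emph{suffices} to isolate it; that is what gives $r_e\le\deg(v)$, whereas ``requires'' points the wrong way. And your monotonicity in $\lambda(p_u)$ is a property of the formula with $r_e$ held fixed, not of the metric on graphs. Enlarging $V_u$ can only lower $r_e$, which the paper's examples compute as a minimum over supporting vertices.
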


Both metrics assign $0$ to anchor-relative centralization before evaluating topology-dependent extent. The edge case $T_L(u,a)=I_L(u,a)=0$ arises when $\mu(p_u,a)=1$, including the corrected case in which $\lambda(p_u)>1$ but all supporting vertices lie in one center. The upper bound of $1$ is recovered via limits rather than an explicit construction:
\begin{equation}
    \lim_{r_v\to 0} \left( e^{-\frac{r_v^2}{|G_{s}|^{-\epsilon}|G_{t}|}} \right) = 1, \qquad
    \lim_{r_e\to +\infty} \left( e^{1-\left( \frac{\lambda(p_u)}{|G_t|} \right)^{-\frac{\epsilon^2}{r_e}}} \right) = 1
\end{equation}
This corresponds respectively to the absence of any isolating vertices and a fully connected subject-specific subgraph. In practice, both limits are bounded by the topology: $r_v$ cannot fall below $0$ and $r_e$ is bounded above by $|G_t|-1$. Together, both operators form a vector per declared pair:
\begin{equation}
    \vec{d}_{u,a} = \begin{bmatrix} T_L(u,a) \\ I_L(u,a) \end{bmatrix}
    \quad \text{where} \quad T_L,I_L\in[0,1].
\end{equation}
This representation enables comparison across compatible subject-anchor profiles and is consistent with the profile-relative logical classes.

% Temporary, verify and polish

\vspace{2mm}

\par

\noindent\textbf{Subject-Anchor Weight Factor.} The parameter $\epsilon>0$ encodes the relative importance of a declared pair within the system. A pair regarded as critical may receive a higher value, amplifying structural penalties. No universal assignment procedure is prescribed; the topology may warrant different assignments under different anchor interpretations. In the evaluations below we fix $\epsilon=1$ for every declared pair.

\vspace{2mm}

\noindent\textbf{Multi-dimensional Aggregate.} Vectors for entries in $\mathcal{E}_s$ can be averaged to describe the system's mean analytical degree of decentralization relative to that declared profile. This supports comparative claims only when the same subject-anchor dimensions and center interpretations are used.

This idea naturally develops the multidimensional account for gradable adjectives. On Sassoon's account \cite{sassoon_typology_2013}, a multidimensional adjective is associated with a contextually selected set of dimensions, each represented by a degree function. Later works also propose that these dimensions may be aggregated through universal quantification, existential quantification, or dimension counting, depending on the adjective. Authors of \cite{dambrosio_hedden_2024} preserve this dimension-based representation but generalize the aggregation stage by treating an aggregation function as a function that takes a profile of dimension-specific value functions and returns an overall ordering of the objects under comparison \cite[pp.~259-262, 269-270]{dambrosio_hedden_2024}. 

Simple element-wise averaging instantiates this general framework as an equally weighted aggregation function:
\begin{equation*}
    A(\mathcal{E}_s)=\frac{1}{|\mathcal{E}_s|}
    \sum_{(u,a)\in\mathcal{E}_s}\vec{d}_{u,a}.
\end{equation*}
It makes equal treatment of the included dimensions explicit rather than deriving comparison solely from threshold satisfaction. Aggregate vectors may be compared only after confirming profile compatibility.

Weights within an individual pair's propositions affect that pair's coordinates but do not change how the completed vectors are subsequently treated \textit{en bloc}. Averaging therefore remains equal at the profile-entry level unless a different aggregation rule is explicitly declared.

%[If this isn't sufficient, we can remove subject weight factor and set it to 1 in our equations, the calculations in this paper will remain unaffected]

Simple averaging is appropriate when the coordinate scales and equal-weighting assumption are accepted. Equal weighting remains a modelling choice rather than a context-independent feature of decentralization \cite{dambrosio_hedden_2024}. An aggregate must therefore be reported together with $\mathcal{E}_s$, the relevant $\mathcal{C}_a$ families, parameter assignments, and the aggregation assumption.

% ^^ Talk about aggregating the dimensions in order to describe something as more or less decentralised: https://www.tandfonline.com/doi/full/10.1080/00048402.2023.2277923

\vspace{2mm}

%and $|E(G_t)| = 4$

\noindent\textbf{Analytical Evaluation.} We evaluate both metrics on the logical instantiations and fix $\epsilon=1$. Within each application context, the compared systems use the same evaluation profile and corresponding center interpretations.

\vspace{1mm}

\noindent\emph{Federated Learning.} Encode a star topology with one aggregator, four clients, and $|E(G_t)|=4$ (Figure \ref{fig:centralized-federated-learning}). Use the profile $\mathcal{E}_{FL}$ declared above. For aggregation authority, $V_u=\{v_1\}$ and $\mu(p_{AggAuth},a_{auth})=1$; hence $T_L(u_{AggAuth},a_{auth})=I_L(u_{AggAuth},a_{auth})=0$, matching its logical anchor-centralized classification.

\begin{figure}[h]
    \centering
    \includegraphics[width=0.25\textwidth]{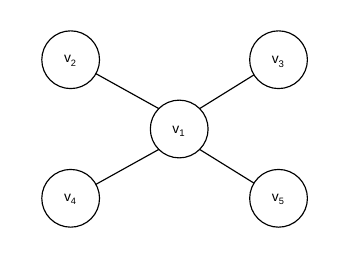}
    \caption{Star-topology centralized federated learning setup with one aggregator and four clients.}
    \label{fig:centralized-federated-learning}
\end{figure}

\noindent For $u_{data}$ and $u_{train}$, $V_u=\{v_2,v_3,v_4,v_5\}$. None of these supporting leaves is an articulation vertex, so $T_L(u_{data},a_{own})=T_L(u_{train},a_{auth})=1$. Each client has one edge to the aggregator, giving $r_e=1$ and $\lambda(p_u)/|G_t|=4/5$; hence:

\begin{equation*}
    I_L(u_{data},a_{own}) = I_L(u_{train},a_{auth})
    = e^{1-(0.8)^{-1}} \approx 0.779
\end{equation*}

From the above, the resulting vectors are:

\begin{equation*}
    \begin{array}{cc}
        \vec d_{data,own}=\vec d_{train,auth}
        = \begin{bmatrix} 1 \\ 0.779 \end{bmatrix}, \\[15pt]
        \vec d_{AggAuth,auth}=\begin{bmatrix}0\\0\end{bmatrix}
    \end{array}
\end{equation*}

As well as the aggregate:
\begin{equation*}
    A(\mathcal{E}_{FL}) = \begin{bmatrix} 0.67 \\ 0.519\end{bmatrix}
\end{equation*}

\vspace{4mm}

For the second example, all five entities hold data and train locally, three distinct authority centers aggregate the model, and $|E(G_t)|=6$ (Figure \ref{fig:decentralized-federated-learning}). For $u_{AggAuth}$, $V_u=\{v_1,v_2,v_3\}$, $\lambda(p_{AggAuth})=3$, and $\mu(p_{AggAuth},a_{auth})=3$. Vertex $v_1$ is the only supporting vertex whose deletion isolates a subgraph of maximum order two, so:

\begin{equation*}
    T_L(u_{AggAuth},a_{auth})
    = e^{-\frac{1^2}{2^{-1}(5)}} \approx 0.67
\end{equation*}

For $I_L(u_{AggAuth},a_{auth})$, at least two edges must be removed to disconnect a supporting vertex, so $r_e=2$. Given $|V_u|=3$, we compute:

\begin{equation*}
    I_L(u_{AggAuth},a_{auth})
    =e^{1-(\frac{3}{5})^{-\frac{1}{2}}}\approx0.748
\end{equation*}

%we target the *weakest* links, out of principle maybe (to argue it) and not the "whole" system, hence the sensitivity.

\noindent For $u_{data}$ and $u_{train}$, every node holds data and trains a local model. As for aggregation authority, one supporting vertex can isolate a subgraph of the same size, so all three declared pairs have the same $T_L$ value.

\begin{figure}[h]
    \centering
    \includegraphics[width=0.25\textwidth]{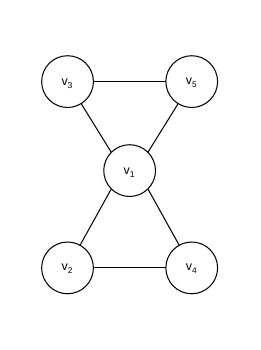}
    \caption{Decentralized federated learning setup with three aggregation authorities, each also acting as data holders and local trainers.}
    \label{fig:decentralized-federated-learning}
\end{figure}

For imperviousness, however, both subjects have the same cardinality of $|V_u|$ (realization of both subjects cover the entire topology). As such, we compute:

\begin{equation*}
    I_L(u_{data},a_{own})=I_L(u_{train},a_{auth})
    =e^{1-(\frac{5}{5})^{-\frac{1}{2}}}=1
\end{equation*}

Since $u_{data}$ and $u_{train}$ are realized across the entire topology, their declared-pair vectors are identical. The three vectors are:

\begin{equation*}
    \begin{array}{c}
        \vec d_{data,own}=\vec d_{train,auth}
        =\begin{bmatrix}0.67\\1\end{bmatrix},\\[15pt]
        \vec d_{AggAuth,auth}=\begin{bmatrix}0.67 \\ 0.748\end{bmatrix}
    \end{array}
\end{equation*}

Along with the aggregate:

\begin{equation*}
    A(\mathcal{E}_{FL}) = \begin{bmatrix} 0.67 \\ 0.916 \end{bmatrix}
\end{equation*}

Now, comparing the two examples based on the results inferred from our analytical layer, we notice how the overall extent of decentralization in the second example is greater than that of a first, despite both being commonly known as instances of federated learning. Void Tolerance of data and training nodes is maximum in the first instance as no disappearing nodes from their sets would result in separating a chunk of a system, meanwhile the same subjects in that example have lower, but relatively high, Imperviousness given that all of them are all help up by only one edge, but there is relatively many of them with respect to the total size of the system. This differs from the second example as sets of realized vertices for both data and training now contain nodes that can isolate a subgraph, meaning that the impact of their absence can have greater effects on the system; indicating that, unlike in the first example, both data and train nodes contain a point of failure. The Imperviousness of data and train subjects in the second example is significantly higher than that of first, provided that they are realized across the entire topology, indicating that it is much harder to isolate those subjects away from the rest of the system with respect to how are they connected across the system, showcasing both pros and cons of both examples across different designs for data and training.

For aggregation authority, the first example has one center and therefore the vector $[0,0]^\mathsf{T}$. The second has three declared authority centers and receives a non-zero vector, although its placement still contains a point of failure. Relative to $\mathcal{E}_{FL}$ and those center interpretations, the first system is partially decentralized and the second fully decentralized. If the three aggregators instead belonged to one authority center, the second aggregation pair would also receive $[0,0]^\mathsf{T}$ despite its unchanged topology. The comparison therefore depends on both structural support and the declared anchor interpretation.

Under the shared profile and stated center families, the second example has the larger aggregate vector and may be described as more decentralized with respect to data ownership, training authority, and aggregation authority.

\vspace{2mm}

\noindent\emph{Blockchain.} We use the same topology $G_t$ and profile $\mathcal{E}_{BC}$ for both examples. Here $|G_t|=7$ and $|E(G_t)|=9$; the topology contains the dense core $\{v_2,v_3,v_4,v_5\}$, endpoint $v_1$, and branch $v_6,v_7$. The stated calculations assume singleton effective centers for the participating entities and $\epsilon=1$.

\begin{figure}[h]
\centering
\includegraphics[width=0.5\textwidth]{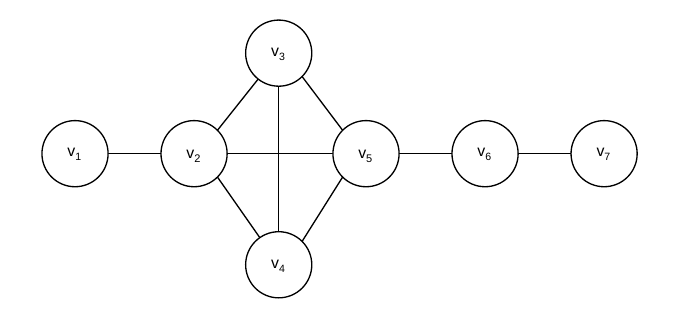}
\caption{Shared blockchain topology used for both blockchain examples.}
\label{fig:blockchain-shared-topology}
\end{figure}

\noindent For the first example, encode a simple permissioned blockchain where one node acts as the block-ordering authority, four nodes store the ledger, and three nodes act as transaction-submission clients. We account for three subjects: Consensus Authority ($u_{\text{cons}}$), Ledger Replication ($u_{\text{ledger}}$), and Transaction Submission ($u_{\text{tx}}$).

For $u_{cons}$, only $v_2$ orders blocks, so $V_u=\{v_2\}$ and $\mu(p_{cons},a_{auth})=1$. By the anchor-centralized edge case:

\begin{equation*}
T_L(u_{cons},a_{auth})=0,\qquad I_L(u_{cons},a_{auth})=0
\end{equation*}

Thus consensus is analytically centralized with respect to authority.

\noindent For $u_{ledger}$, the four core nodes replicate the ledger: $V_u=\{v_2,v_3,v_4,v_5\}$. Removing $v_2$ isolates $v_1$, while removing $v_5$ isolates $\{v_6,v_7\}$. Hence $r_v=2$ and the maximum isolated subgraph has order $|G_s|=2$. We compute:

\begin{equation*}
T_L(u_{ledger},a_{own})=e^{-\frac{2^2}{2^{-1}(7)}}\approx0.319
\end{equation*}

For Imperviousness, the weakest ledger node in the core requires removing three edges to isolate it from the rest of the topology, so $r_e=3$. Since $\lambda(p_{\text{ledger}})=4$, we compute:

\begin{equation*}
I_L(u_{ledger},a_{own})=e^{1-\left(\frac{4}{7}\right)^{-\frac{1}{3}}}\approx0.815
\end{equation*}

\noindent For $u_{\text{tx}}$, suppose that transactions are submitted only by client-facing nodes $V_u(\text{tx})=\{v_1,v_6,v_7\}$. Only $v_6$ can isolate a subgraph when removed, since removing $v_6$ separates $v_7$ from the rest of the system. Hence, $r_v = 1$ and $|G_s| = 1$. We compute:

\begin{equation*}
T_L(u_{tx},a_{auth})=e^{-\frac{1^2}{1^{-1}(7)}}\approx0.867
\end{equation*}

For Imperviousness, either $v_1$ or $v_7$ can be isolated by removing one edge, so $r_e=1$. Since $\lambda(p_{\text{tx}})=3$, we compute:

\begin{equation*}
I_L(u_{tx},a_{auth})=e^{1-\left(\frac{3}{7}\right)^{-1}}\approx0.264
\end{equation*}

\noindent The declared-pair vectors for the first blockchain example are therefore:

\begin{equation*}
    \begin{aligned}
        \vec d_{cons,auth} & =\begin{bmatrix}0\\0\end{bmatrix},\\
        \quad\vec d_{ledger,own} & =\begin{bmatrix}0.319\\0.815\end{bmatrix},\\
        \quad\vec d_{tx,auth} & =\begin{bmatrix}0.867\\0.264\end{bmatrix}
    \end{aligned}
\end{equation*}

With an aggregate:

\begin{equation*}
    A(\mathcal{E}_{BC})=\begin{bmatrix}0.395\\0.36\end{bmatrix}
\end{equation*}

For the second example, keep the exact same system topology, but assign blockchain functionality differently. Encode a blockchain where consensus is performed by the four core nodes, the ledger is replicated by all nodes, and transaction submission is exposed only through two core gateway nodes. We again account for the same three subjects: Consensus Authority ($u_{\text{cons}}$), Ledger Replication ($u_{\text{ledger}}$), and Transaction Submission ($u_{\text{tx}}$).

For $u_{\text{cons}}$, suppose that the four core nodes jointly validate and order blocks $V_u(\text{cons})=\{v_2,v_3,v_4,v_5\}$. This is the same realization used for ledger replication in the first example. Hence, $r_v = 2$, $|G_s| = 2$, $r_e = 3$, and $\lambda(p_{\text{cons}}) = 4$. We compute:

\begin{equation*}
T_L(u_{cons},a_{auth})=e^{-\frac{2^2}{2^{-1}(7)}}\approx0.319
\end{equation*}

and

\begin{equation*}
I_L(u_{cons},a_{auth})=e^{1-\left(\frac{4}{7}\right)^{-\frac{1}{3}}}\approx0.815
\end{equation*}

\noindent For $u_{\text{ledger}}$, suppose that every node stores a copy of the ledger $V_u(\text{ledger})=\{v_1,\dots,v_7\}$. Now the subject is realized across the whole topology. However, $v_2$, $v_5$, and $v_6$ can each isolate some part of the topology when removed. The largest isolated subgraph is still the branch $\{v_6, v_7\}$, so $r_v = 3$ and $|G_s| = 2$. We compute:

\begin{equation*}
T_L(u_{ledger},a_{own})=e^{-\frac{3^2}{2^{-1}(7)}}\approx0.076
\end{equation*}

For Imperviousness, since the ledger subject is realized across the entire topology, $\lambda(p_{\text{ledger}})/|G_t| = 1$. Therefore:
\begin{equation*}
I_L(u_{ledger},a_{own})=e^{1-\left(\frac{7}{7}\right)^{-\frac{1}{r_e}}}=1
\end{equation*}

\noindent For $u_{\text{tx}}$, suppose that transaction submission is exposed only through two non-articulation gateway nodes in the core $V_u(\text{tx})=\{v_3,v_4\}$. Neither $v_3$ nor $v_4$ can isolate a subgraph when removed. Hence, $|G_s| = 0$, $|G_t|>1$, and $\lambda(p_{\text{tx}}) > 1$, so:

\begin{equation*}
T_L(u_{tx},a_{auth})=1
\end{equation*}

For Imperviousness, isolating either $v_3$ or $v_4$ requires removing three edges, so $r_e = 3$. Since $\lambda(p_{\text{tx}})  =2$, we compute:

\begin{equation*}
I_L(u_{tx},a_{auth})=e^{1-\left(\frac{2}{7}\right)^{-\frac{1}{3}}}\approx0.596
\end{equation*}

\noindent The declared-pair vectors for the second blockchain example are therefore:

\begin{equation*}
    \begin{aligned}
        \vec d_{cons,auth} & =\begin{bmatrix}0.319\\0.815\end{bmatrix},\\
        \quad\vec d_{ledger,own} & =\begin{bmatrix}0.076\\1\end{bmatrix},\\
        \quad\vec d_{tx,auth} & =\begin{bmatrix}1\\0.596\end{bmatrix}
    \end{aligned}
\end{equation*}

Alongside the aggregate:

\begin{equation*}
    A(\mathcal{E}_{BC})=\begin{bmatrix}0.465\\0.804\end{bmatrix}
\end{equation*}

Comparing the blockchain examples over the same topology and profile, the analytical layer distinguishes both the declared center interpretation and the placement of each subject. In the first example, consensus is authority-centralized because one authority center performs ordering. Ledger ownership and transaction authority are anchor-decentralized, but their vectors expose different connectivity weaknesses.

In the second example, the same topology supports a different profile of vectors. Under the singleton authority interpretation, consensus reaches four centers; ledger replication reaches every ownership center; and transaction submission reaches two authority centers. The topology-dependent coordinates then expose articulation and isolation effects. If all four consensus vertices were placed inside one authority region, the consensus vector would instead be forced to zero without changing $G$, $\delta$, or $\lambda$.

Under the shared profile and stated center families, the second blockchain example has the larger aggregate vector. That comparison is conditional on these declared anchors and cannot be inferred from topology alone. The above calculations are briefly summarized on Table \ref{tab:analyticalLayer-summary}.

\begin{table*}[ht]
    \centering
    \small
    \setlength{\tabcolsep}{5pt}
    \renewcommand{\arraystretch}{0.95}
    \caption{Summary of analytical decentralization values across the evaluated system instances. $T_L$ denotes Void Tolerance and $I_L$ denotes Imperviousness. Aggregate rows report the element-wise mean over the corresponding declared profile.}
    \label{tab:analyticalLayer-summary}
    \begin{tabularx}{\textwidth}{@{}llccX@{}}
        \toprule
        \textbf{Instance} & \textbf{Declared pair} & $\mathbf{T_L}$ & $\mathbf{I_L}$ & \textbf{Structural summary} \\
        \midrule
        FL-1 & $(u_{\text{data}},a_{\text{own}})$ & 1.000 & 0.779 & No supporting articulation vertex; single-edge client connectivity. \\
             & $(u_{\text{train}},a_{\text{auth}})$ & 1.000 & 0.779 & Same realization and connectivity as data ownership. \\
             & $(u_{\text{AggAuth}},a_{\text{auth}})$ & 0 & 0 & Single authority center; anchor-centralized. \\
             & $A(\mathcal{E}_{FL})$ & 0.670 & 0.519 & -- \\
        \midrule
        FL-2 & $(u_{\text{data}},a_{\text{own}})$ & 0.670 & 1.000 & Topology-wide realization; supporting articulation point present. \\
             & $(u_{\text{train}},a_{\text{auth}})$ & 0.670 & 1.000 & Same realization as data ownership. \\
             & $(u_{\text{AggAuth}},a_{\text{auth}})$ & 0.670 & 0.748 & Three authority centers; one supporting articulation vertex. \\
             & $A(\mathcal{E}_{FL})$ & 0.670 & 0.916 & --. \\
        \midrule
        BC-1 & $(u_{\text{cons}},a_{\text{auth}})$ & 0 & 0 & Single ordering authority; anchor-centralized. \\
             & $(u_{\text{ledger}},a_{\text{own}})$ & 0.319 & 0.815 & Core replication; two supporting articulation vertices. \\
             & $(u_{\text{tx}},a_{\text{auth}})$ & 0.867 & 0.264 & Limited articulation exposure; weak edge-isolation resistance. \\
             & $A(\mathcal{E}_{BC})$ & 0.395 & 0.360 & -- \\
        \midrule
        BC-2 & $(u_{\text{cons}},a_{\text{auth}})$ & 0.319 & 0.815 & Four-node consensus realization in the dense core. \\
             & $(u_{\text{ledger}},a_{\text{own}})$ & 0.076 & 1.000 & Topology-wide replication; several articulation vertices. \\
             & $(u_{\text{tx}},a_{\text{auth}})$ & 1.000 & 0.596 & Two non-articulation core gateways; stronger edge connectivity. \\
             & $A(\mathcal{E}_{BC})$ & 0.465 & 0.804 & -- \\
        \bottomrule
    \end{tabularx}
\end{table*}

\section{Implementation}
\label{sec:implementation}

We implemented the ontology as a browser-only research sandbox that can be hosted as a static GitHub Pages site \cite{implementation}. The interface is separated into semantic HTML, modular style sheets, and JavaScript modules for the evaluator, graph workspace, simulator, result views, and application coordination. No server-side component or build step is required, so an analysis remains local to the browser and the complete project can be reproduced from the repository.

The workspace represents a system as a finite simple undirected graph and stores each declared evaluation dimension as a subject-anchor pair $(u,a)\in E_s$. Anchors are first-class system inputs: the user declares a contextual interpretation and defines its single center family $C_a$ by painting named vertex regions directly on the canvas. Regions may overlap, and uncovered vertices are completed as singleton centers. Profile entries then select an anchor and record $\delta(p_u)$, the supporting vertices used to derive $\lambda(p_u)$, and $\epsilon_{u,a}$. Consequently, every entry using $a$ reuses the same $C_a$ when computing $\mu(p_u,a)$, while a distributed subject may still be centralized under one shared center.

For analytical evaluation, one iterative low-link traversal derives the articulation structure of the full topology in $O(|V|+|E|)$ time and is reused across all profile entries. The evaluator calculates $r_v$, $|G_s|$, and $r_e$, applies the anchor-centralized zero cases, and returns $\vec d_{u,a}=[T_L,I_L]^\mathsf{T}$ and the profile mean $A(E_s)$. Seeded simulations separately randomize topology, support, multiplicity, contextual anchor assignment, and one shared center family per generated anchor. The supplied ownership, authority, trust, and governance generators use transparent actor/domain heuristics (including single operators, organizational partitions, hub-led domains, and overlapping consortia). Automated tests reproduce the paper's federated-learning and blockchain examples and verify shared, overlapping, and singleton-completed center families.

\section{Discussion} % Here is where our comparison will go
% The question we want to protect ourselves from is: whether the ontology clarifies distinctions that existing frameworks/definitions confuse. We do that by showing how other methods fail to do what we can (also making our approach not trivial)

Having established the ontology, its formal properties, and analytical capabilities, we now reflect on what the proposed definition contributes relative to the existing literature and the broader implications that follow from it.

\subsection{Distribution and Decentralization as Distinct Properties.} 
A central contribution of the framework is the formal separation. The value $\delta(p_u)$ counts realization particulars, $\lambda(p_u)$ counts supporting vertices, and $\mu(p_u,a)$ counts center regions reached under a declared anchor. Each with their separate conditions in regards to which aspects of decentralization are they meant to address, formalizing distribution as a placement condition $\lambda(p_u)>1$ and decentralization as an anchor-relative condition $\mu(p_u,a)>1$. Simultaneously encapsulating various circumstances in which decentralization may exist, including where several vertices may lie in one authority or ownership region, yielding $\lambda(p_u)>1$ and $\mu(p_u,a)=1$. This makes the distinction acknowledged in prior literature \cite{rossi_towards_2019,lui_2026_blockchainbased} enforceable in the formalism.

\subsection{Epistemological Distinction between Distribution and Decentralization.} 
Distribution can be observed from the selected topology, whereas anchor-relative decentralization additionally requires an interpretation of what constitutes a center of centralization (which led our intuition that for any proper formalization of this concept needs to be cross-contextually operational). The fact that a function is implemented at several vertices does not establish that those vertices have independent ownership, authority, or trust. The framework therefore makes a practitioner state $a$ and $\mathcal{C}_a$ rather than allowing a deployment count to stand in for a stronger organizational claim. If no shared or external center is declared, singleton completion makes that modelling assumption explicit rather than silently presuming it.

\subsection{Relational Character of Subject Multiplicity.} 
Although all three functions return their own independent values, their extensions are relationally derived. Our projection particular connects the system, topology, and subject through \textit{hasProjection}, \textit{ofSubject}, and \textit{inTopology}. Whereas $hasRealization$ supplies $\mathcal{R}_{p_u}$ and $realizedAt$ supplies $V_{p_u}$. The anchor relation supplies $a$, while $\mathcal{C}_a$ gives its extensional center interpretation. Successfully separating these fundamental concepts to decentralization all the while sustaining their intertwined nature. As such, the properties from which these functions are derived are relational \cite{guarino1998,rossi_towards_2019}.

Graph realism supports the simplification because each center is a vertex subset rather than a second parallel object hierarchy. The predicate inside $\mu$ intuitively asks whether some realization of $p_u$ is located at some vertex inside that subset. The graph's edges then determine the resilience of the same support in the analytical layer. Anchor-relative center multiplicity and edge sensitivity are therefore separate, allowing $\mu$ to classify which center regions are reached, while $T_L$ and $I_L$ evaluate how the supporting vertices are connected.

\subsection{Collective Amnesia Regarding the Relational Aspect of Decentralization.} 
The broader literature exhibits what might be described as collective amnesia about the relational and interpretative aspects of decentralization. Foundational treatments attended to network relations \cite{baran_distributed_1964,lamport_byzantine_1982,brewer_towards_2000}, while many later measures focus on participant or resource distributions \cite{nakamoto_bitcoin_2008,di_bona_concept_2023,srinivasan_quantifying_2017,lin_measuring_2021}. Participant count does not determine whether participants fall under one or several effective centers, and neither a count nor $\mu$ alone captures the graph vulnerabilities evaluated by Void Tolerance and Imperviousness. 

\subsection{Analytical Depth Beyond Existing Metrics.} 
The analytical layer adds information after the logical anchor boundary has been established with an evaluation profile based on a given topology. Gini coefficients, entropy, and threshold coefficients summarize different scalar properties and do not by themselves encode center regions or the fragility of subject-supporting subgraphs. Void Tolerance and Imperviousness instead assess vertex and edge-removal effects for each declared pair. The resulting $\vec d_{u,a} = [T_L(u,a), I_L(u,a)]^\mathsf{T}$ exposes two structural axes, while $\mu(p_u,a) = 1$ prevents a distributed placement under one center from receiving a non-zero decentralization vector. 

\subsection{Visualization and Cross-Contextual Comparison.} 
By placing $\vec d_{u,a}$ in $[0,1]^2$, compatible subject-anchor pairs can be compared along the same two axes. A vector approaching $[1,1]^\mathsf{T}$ simply indicates high values under both graph operators, on the other hand, $[0,0]^\mathsf{T}$ is forced for an anchor-centralized pair. Intermediate points reveal whether the weaker coordinate is vertex or edge-focused allowing for cross-context comparisons only when the compared profiles, anchor meanings, center-construction rules, topology abstraction, and parameters are disclosed.

\subsection{Bringing Formal Reasoning closer.} 
The ontology bridges informal judgement and formal reasoning by exposing, rather than hiding, the choices that precede inference. Analysts identify subjects, anchors, center regions, and the evaluation profile according to the intended function of the system. Formal layer then computes $\delta$, $\lambda$, and $\mu$ and applies fixed truth conditions. This reflects the role of ontological commitment described by Guarino \cite{guarino1998}, where context supplies an interpretation, while the ontology constrains the consequences (which in our case achieves contextual operationalism). The result is not interpretation-free, but it is auditable because a claim such as ``data is decentralized with respect to ownership'' can be checked against an explicit $p_u$, $a$, and $\mathcal{C}_a$.

\section{Limitations}

Although the proposed ontology establishes a formally grounded and cross-contextual account of decentralization, it remains subject to several limitations that arise from the assumptions necessary for its construction. These limitations should be understood as consequences of the intended scope of the framework rather than deficiencies of its formal foundations.

\subsection{Classes of Decentralized Systems}

The present ontology is intentionally concerned with decentralization as a structural property of communication systems. As a consequence, we do not attempt to capture behavioral, economic, or performance-related properties that are frequently associated with decentralized systems, such as latency, incentive compatibility, robustness, or fairness. While these characteristics undeniably influence practical system evaluation, they constitute distinct analytical dimensions whose formal treatment requires additional ontological commitments beyond those developed here.

Likewise, the ontology evaluates communication systems as snapshots rather than temporally evolving entities. Topology evolution or changing ownership and authority must be represented through successive graphs and center families rather than explicit reasoning. Anchor-relative classification is therefore indexed implicitly to the time at which $G$ and $\mathcal{C}_a$ are asserted.

\subsection{Graph Realism as a Basis for System Representation}

Our framework adopts graph representations as ontologically meaningful abstractions of communication systems. This commitment follows from the ontological realist position developed throughout the paper, namely that relations between entities are treated as genuine features of the systems under investigation rather than merely convenient modeling artifacts. Nevertheless, this commitment necessarily limits the scope of the ontology to systems that admit faithful graph representations.

Certain interactions are more naturally expressed through higher-order or non-pairwise structures beyond ordinary graphs. Anchor center families can distinguish some semantically different systems with identical communication topologies, but only when those distinctions are supplied as vertex regions. They do not make an ordinary graph capable of representing every higher-order dependency. Ontological realism therefore remains constrained by both the topology abstraction and the fidelity of the anchor interpretation.

\subsection{Ontological Realism, Nominalism, and the Scope of Structural Decentralization}

A further limitation concerns the metaphysical status of decentralization and the extent to which its associated properties can be represented without altering the ontology's foundational commitments. In the limited sense relevant to our work, a nominalist treatment would use ``decentralized'' as a label without committing to a common structure that determines when it applies. Our ontology instead gives explicit truth conditions over a subject projection and an anchor interpretation ($\mu(p_u,a)=1$ or $\mu(p_u,a)>1$). This follows the realist position that an ontology should constrain its models according to an intended conceptualization \cite{guarino1998, arp_building_ontologies}.

The anchor layer makes two systems distinguishable even when $G$, $\mathcal{R}_{p_u}$, and $V_{p_u}$ are identical. If all supporting vertices are assigned to one ownership region in the first system and to several ownership regions in the second, $\delta$ and $\lambda$ remain equal while $\mu$ differs. Ownership, authority, and trust are therefore qualify the interpretation under which a subject's centers are counted. Our ontology does not discover beneficial ownership, effective authority, collusion, shared credentials, delegation, or hidden common control from the communication graph. A center family is an analyst-supplied extensional abstraction over vertices. If the supplied $\mathcal{C}_a$ is wrong or incomplete, the resulting $\mu$ and every downstream classification may also be wrong. Identifying stable real-world principals and deciding whether nominally separate entities constitute one effective center remain modelling and evidential tasks outside the counting rule.

The singleton completion convention is particularly consequential. Treating every uncovered vertex as an independent effective center ensures total coverage and makes $\mu$ defined, but absence of a declared shared center is not evidence of actual independence. Results obtained through the default family must therefore be reported as conditional on that convention. Where evidence is unavailable, analysts should distinguish ``no shared center declared'' from an empirical claim that no shared center exists.

The extensional simplification also omits identity, intensity, and internal organization. Because $\mathcal{C}_a$ is a set of vertex subsets, two proposed centers with exactly the same vertex extension collapse into one effective center from the view of our ontology. The model therefore assumes that centers are distinguished by their graph-realistic regions rather than by a separate identity. Every intersected region contributes one to $\mu$ regardless of its size, voting weight, threshold, or degree of control. If overlapping regions are used, one vertex may contribute to more than one counted center and that interpretation must be justified. Richer principal, delegation, weighted-control, or temporal models could refine these cases, but they would add commitments beyond the corrective anchor/center abstraction developed here.

Accordingly, the framework establishes graph-realistic, subject, and anchor-relative decentralization under declared center semantics. It does not assert organizational independence in every broader social, legal, economic, or adversarial sense. Its transferable element is the invariant counting rule; the truth of a concrete result remains conditional on the fidelity of $G$, $p_u$, $a$, $\mathcal{C}_a$, and $\mathcal{E}_s$.

\subsection{Ad Hoc Metric Construction and Interpretative Scope}

Void Tolerance and Imperviousness are purpose-built operators whose functional forms are not uniquely entailed by the axioms. Exponential normalization, $\epsilon$, and aggregation of subject-anchor vectors are analytical design choices (even if rooted in formal semantics), and alternative functions could use the same ontology while producing different scales or orderings.

In our view, \emph{ad hoc} means constructed for a particular analytical purpose, not arbitrary. Void Tolerance concerns subject-relevant vertex removal, meanwhile, Imperviousness concerns the edge-removal effort needed to isolate a supporting vertex. The anchor affects the logical boundary, but the formulas above that boundary still operate on $V_{p_u}$ rather than on the internal organization of each center. Consequently, two non-centralized anchor interpretations with the same support may receive identical vectors even when their center counts differ.

We underline this distinction because a formally specified statistic does not, by itself, define decentralization. The Gini coefficient measures inequality within a distribution without establishing that equality is identical to decentralization \cite{lin_measuring_2021}. A singleton population, for example, has no internal inequality. Shannon's Entropy measures uncertainty, while the Herfindahl-Hirschman Index measures concentration. Treating any of these as decentralization requires an additional premise about the subject, anchor, and relevant centers.

% In much of the existing literature, this interpretative premise is established informally. A resource distribution, validator-weight distribution, or participant population is selected, a mathematically well-defined statistic is applied to it, and the resulting value is subsequently described as a degree of decentralization.

Based on the above, the formal precision of the statistic does not automatically transfer to the conceptual inference that the statistic measures decentralization itself. Different metrics may therefore produce equally valid descriptions of inequality, concentration, uncertainty, or attack thresholds while representing different and potentially incompatible interpretations of decentralization.

Nevertheless, the purpose-built nature of Void Tolerance and Imperviousness remains a genuine methodological limitation of our approach. Their normalization, sensitivity to $\epsilon$, and aggregation procedure require independent analysis and comparison with alternative graph and concentration measures. Numerical results must be reported with the topology, evaluation profile, center families, parameter assignments, and aggregation assumptions. Their values are conditional analytical outputs, not context-independent constants of decentralization. 

The analytical contribution therefore does not rest on claiming that the two metrics are uniquely correct. It makes the relationship between definitions, graph properties, metric construction, and their interpretation explicit. What's worth noting is that our operators remain replaceable without undermining the foundational separation established by $\delta$, $\lambda$, and $\mu$.

\subsection{Formal and Informal Reasoning}

As argued throughout this work, following Finocchiaro \cite{Finocchiaro_2005, Finocchiaro2005-FINAAA-4}, we treat formal specification and context-sensitive judgment as complementary rather than mutually exclusive. Accordingly, we constrain informal judgment by making its conceptual commitments explicit within a formally specified structure.

In particular, identifying subjects, selecting anchors, constructing center regions, and abstracting real systems remain dependent on informed human judgment and evidence. Different analysts may construct different profiles while remaining formally consistent. The framework reduces ambiguity by requiring those choices to be exposed, but it cannot eliminate them. As such, reproducibility requires publishing $\mathcal{E}_s$ and every relevant $\mathcal{C}_a$ alongside the graph.

\section{Conclusions}

This work addressed the Decentralization Problem by introducing a formal, graph-based ontology that defines decentralization independently of application domain and separates three fundamental questions: how many realization particulars exist ($\delta$), across how many vertices they are placed ($\lambda$), and how many center regions they reach under an anchor such as ownership or authority ($\mu$). By incorporating ontological commitments, formal semantics, graph-theoretic modeling, and subject-specific analytical metrics, the proposed framework establishes a transferable basis for reasoning about decentralization across computer communication systems. In doing so, it formally distinguishes decentralization from distribution and demonstrates that structurally identical topologies may exhibit different decentralization properties depending on the subject under consideration.

The ontology was validated through two types of communication systems, showing that it provides consistent classifications while exposing structural characteristics that conventional scalar measures fail to capture. More broadly, this work establishes decentralization as a formally definable relational property rather than an application-dependent or solely informally interpreted concept, providing a common foundation for future theoretical and practical work on decentralized computer systems.

\section{AI Disclosure Statement}

Artificial intelligence tools were used primarily for grammar, language checks, and for stress-testing of the manuscript. All outputs were critically reviewed by the authors and revised accordingly. Authors retain full responsibility for the article.

\bibliographystyle{IEEEtran}
\bibliography{references}

\appendix

% We could potentially shove this into the appendix
\noindent\textbf{A1. Aristotelian Definitions}
Having created our taxonomy, we establish Aristotelian definitions (as per recommended best practices \cite{arp_building_ontologies}) for all terms that can be derived using our ontology:\par
\vspace{1mm}

\noindent \textsc{Anchor} = \textit{def.} A system-relative interpretation, such as ownership or authority, under which effective centers are distinguished.\par
\noindent \textsc{Center Region} = \textit{def.} A member $V_c\in\mathcal{C}_a\subseteq\mathcal{P}(V_T)$ representing one effective center under anchor $a$.\par
\noindent \textsc{Centralized Subject} = \textit{def.} A declared pair $(u,a)$ whose unique projection particular satisfies $\mu(p_u,a)=1$.\par
\noindent \textsc{Decentralized Subject} = \textit{def.} A declared pair $(u,a)$ whose unique projection particular satisfies $\mu(p_u,a)>1$.\par

\noindent \textsc{Centralized System} = \textit{def.} A system for which every entry in the declared evaluation profile is anchor-centralized.\par
\noindent \textsc{Partially Decentralized System} = \textit{def.} A system whose declared evaluation profile contains at least one anchor-centralized and at least one anchor-decentralized entry.\par
\noindent \textsc{Fully Decentralized System} = \textit{def.} A system with a non-empty declared evaluation profile every entry of which is anchor-decentralized.\par
\noindent \textsc{Undistributed Subject} = \textit{def.} A subject whose projection support satisfies $\lambda(p_u)=1$.\par
\noindent \textsc{Distributed Subject} = \textit{def.} A subject whose projection support satisfies $\lambda(p_u)>1$.\par
\noindent \textsc{Undistributed System} = \textit{def.} A system where all subjects of which are, altogether, held on a singular node on the associated topology. \par
\noindent \textsc{Distributed System} = \textit{def.} A system all subjects of which are, altogether, projected across more than one node on the associated topology.\par

\end{document}